\documentclass[onecolumn]{ifacconf}
\usepackage{comment}
\usepackage{graphicx}      
\usepackage{natbib}        
\usepackage{amsmath,amssymb,amsfonts}
\usepackage{svg}
\usepackage{xcolor}
\usepackage{bm}
\usepackage[utf8]{inputenc}
\usepackage{todonotes}
\usepackage[T1]{fontenc}
\usepackage[hyphens]{url}
\usepackage[dvipsnames]{xcolor}

\usepackage{subcaption}

\newtheorem{assumption}{Assumption}

\newtheorem{definition}{Definition}
\newcommand{\col}{\text{col}}
\newcommand{\G}{\mathcal{G}}

\begin{document}
\begin{frontmatter}

\title{Hierarchical parameter estimation for distributed networked systems:\\a dynamic consensus approach}

\thanks[footnoteinfo]{This work was supported in part by the Secretaría de Ciencia, Humanidades, Tecnología e Innovación (SECIHTI), México, previously administered by the Consejo Nacional de Humanidades, Ciencias y Tecnologías (CONAHCYT) with grant number 1229622, and in part by projects PID2021-124137OB-I00 and 
PID2024-159279OB-I00 funded by MICIU/AEI/10.13039/501100011033 and by ERDF/EU, by  
project REMAIN S1/1.1/E0111 (Interreg Sudoe Programme, ERDF), and via project DGA T45\_23R (Gobierno de Aragón). Grant reference BG24/00121 funded by MICIU/AEI/10.13039/501100011033. {\color{red}This is the pre-peer reviewed version of the following article: Méndez-Castillo, A., Aldana-López, R., Ramírez-Treviño A., Aragüés, R., Gómez-Gutierrez D., "Hierarchical parameter estimation for distributed networked systems: A dynamic consensus approach", Nonlinear Analysis: Hybrid Systems, 2026, which has been published in final form at https://doi.org/10.1016/j.nahs.2026.101730. This article may be used for non-commercial purposes in accordance with Wiley Terms and Conditions for Use of Self-Archived Versions. \textbf{Please cite the publisher’s version.} For the publisher’s version and full citation details please see: https://doi.org/10.1016/j.nahs.2026.101730}}

\author[First,Second]{Ariana R. Méndez-Castillo} 
\author[Second]{Rodrigo Aldana-López} 
\author[First]{Antonio Ramírez-Treviño}
\author[Second]{Rosario Aragues}
\author[Third]{David Gómez-Gutiérrez}

\address[First]{Department of Electrical Engineering, Cinvestav-Guadalajara, Jalisco, 45019 México (e-mail: ariana.mendez@cinvestav.mx, antonio.ramirezt@cinvestav.mx)}
\address[Second]{Department of Computing and Systems Enginering - I3A, University of Zaragoza, Zaragoza 50018 España (e-mail: rodrigo.aldana.lopez@gmail.com, raragues@unizar.es ) }

\address[Third]{Tecnológico Nacional de México, Instituto Tecnológico José Mario Molina Pasquel y Henríquez, 
Cam. Arenero 1101, 45019 Zapopan, Jalisco, Mexico.\\
(e-mail: david.gomez.g@ieee.org)}

\begin{abstract}               
This work introduces a novel two-stage distributed framework to globally estimate constant parameters in a networked system, separating shared information from local estimation. The first stage uses dynamic average consensus to aggregate agents’ measurements into surrogates of centralized data. Using these surrogates, the second stage implements a local estimator to determine the parameters. By designing an appropriate consensus gain, the persistence of excitation of the regressor matrix is achieved, and thus, exponential convergence of a local Gradient Estimator (GE) is guaranteed. The framework facilitates its extension to switched network topologies, quantization, and the heterogeneous substitution of the GE with a Dynamic Regressor Extension and Mixing (DREM) estimator, which supports relaxed excitation requirements.

\end{abstract}

\begin{keyword}
Distributed estimation, Parameter estimation, Dynamic Average Consensus, Dynamic Regressor Extension and Mixing, Hierarchical estimation
\end{keyword}

\end{frontmatter}

\section{Introduction}\label{sec:introduction}
The focus of this work is on estimating unknown constant parameters using linear regression data. This formulation naturally emerges in numerous applications, including adaptive control, cooperative robotics, power grids and sensor networks where reliable parameter identification is crucial for the stability and performance  of control strategies \citep{olfati2007consensus, Brouillon2024}. In particular, linear regression models appear in robot manipulator control relating torques, joint coordinates and the parameters of the robot \citep{gaz2019dynamic}.
 
Two architectural categories can be distinguished: the \emph{centralized} one, where agents transmit their information to a central unit that estimates the parameters, and the \emph{distributed} one, where agents estimate the global parameters locally through collective interaction. In the latter setting, each agent has access only to partial information.

In the centralized case, different algorithms have been proposed as estimators. For example, algorithms based on optimal filtering methods, such as the Kalman filter in discrete time \citep{kalman1960new, maybeck1982stochastic} and the Kalman–Bucy filter in continuous time \citep{golovan2002kalman} have been used in the form of recursive least-squares iterations. They model the unknown parameters as constant states and thus, provide unbiased estimates with minimum variance under additive Gaussian measurement noise. However, the convergence properties remain limited since the error covariance evolves according to a Riccati equation guaranteeing asymptotic convergence, but not exponential \citep{zhang2016convergence}. To solve this inconvenience, algorithms relying on gradient-based methods were introduced. For example, the Gradient Estimator (GE) \citep{ioannou1996robust, sastry2011adaptive} is widely used in both continuous- and discrete-time designs, and is usually formulated for noiseless measurements. For noisy measurements, the resulting estimation error can be explicitly analyzed, and a closed-form expression for the covariance can be derived \citep{MOSER2015149_GE}. Exponential convergence to the true parameters is guaranteed only when the persistence of excitation (PE) condition is maintained \citep{anderson2003exponential}. 
However, PE is a sufficient but not necessary condition for correct parameter estimation. To relax this requirement, the Dynamic Regressor Extension and Mixing (DREM) algorithm has been proposed \citep{aranovskiy2016drem,ortega2021parameter}. DREM reformulates the original regression into decoupled scalar equations, enabling parameter convergence under excitation conditions weaker than PE \citep{ortega2020DREM_PE}.

Unfortunately, centralized architectures have significant limitations, such as high bandwidth requirements, the computational power required by the central unit, and limited resilience to communication failures, among others \citep{kia2019tutorial}. These limitations are exacerbated when the number of agents increases. To face these challenges, distributed estimators have been introduced. Examples include distributed extensions of Kalman filter based parameter estimators \citep{lendek2007distributed, ryu2023consensus}, which only guarantee asymptotic convergence under the same conditions as their centralized counterparts. A widely studied alternative is the so called ``\emph{consensus+}\emph{innovations}'' framework \citep{kar2013consensus+, lorenz2025robust, papusha2014collaborative}. In this approach, each agent runs a tightly coupled gradient-based estimator with consensus enforcement (See Figure \ref{fig:network}-Top). Convergence relies on the notion of cooperative persistence of excitation (cPE) \citep{Chen2014PEcoope, yan2025distributed,Zheng2016SwitchGraphs,RO2021ESt_dis}, which ensures that the excitation condition is satisfied collectively at the network level. Variants based on least mean squares consensus adaptive filters have also been considered for different network topologies \citep{xie2018necessary}. A drawback of this framework is its tightly coupled structure, where consensus and parameter estimation are closely intertwined. In such approaches, convergence analyses depend on joint Lyapunov arguments that simultaneously treat both disagreement and estimation errors. Modifying or replacing the underlying estimator requires substantial changes to the entire algorithm and a new convergence analysis. Likewise, examining effects such as quantization in communication is not straightforward.

Lately, hierarchical architectures in multi-agent settings have been investigated for applications such as fault estimation \citep{liu2018hierarchical}, controller design \citep{Cheng2023Hierarchical}, and consensus algorithms \citep{chen2020hierarchical}, as they provide enhanced flexibility and heterogeneous local strategies. Such architectures also facilitate the integration of consensus and estimation layers, since they can be analyzed and implemented separately \citep{chen2020hierarchical}. In this context, hierarchical distributed parameter estimation schemes based on the DREM framework have been proposed \citep{RO2021ESt_dis, yan2025distributed}. However, these approaches either require local excitation conditions or fail to guarantee exponential convergence of the estimates. Consequently, no hierarchical method currently available in the literature achieves exponential convergence under the non-local cPE condition. 

Motivated by the previous discussion, the contributions of this work are as follows.
\begin{itemize}
    \item We introduce a hierarchical distributed parameter estimation framework that decouples consensus from local estimation. A Dynamic Average Consensus (DAC) block aggregates agents’ measurements into local surrogates of the centralized regression, followed by a local estimation block that recovers the unknown parameters. Compared to consensus+innovations methods, this separation allows independent analysis of consensus and estimation convergence.
\item The proposed framework is estimator agnostic. As a concrete instance, we present the first distributed DREM-based estimator operating under cPE, in contrast to existing hierarchical approaches which require local excitation.
\item The results are further extended to switching communication graphs and to scenarios with quantized information exchange.
\end{itemize}

{\bf Notation: }Let $a,b \in \mathbb{R}$. We write $a \ge b$ or $a \leq b$ to denote the usual order on \(\mathbb{R}\).  Let $\mathbb{R}^+$ denote the subspace of positive real numbers. The floor operator $\lfloor\bullet\rfloor$ denotes the greatest integer less than or equal to its argument. The identity matrix of dimension $n$ is denoted by $\mathbf{I}_n$. Let $\boldsymbol{1}=[1, \dots, 1]^\top\in\mathbb{R}^N$. The $i$-th canonical vector is denoted by $\mathbf{e}_i$ of appropriate dimension. The Kronecker product is denoted by $\otimes$. Let $\| \bullet \|$ denote the Euclidean norm for vector inputs, and induced Euclidean norm for matrices. Let $\mathbf{A}$ be a matrix. $|\mathbf{A}|$ represents the component-wise absolute value. The largest eigenvalue of $\mathbf{A}$ is denoted by $\lambda_{\max}(\mathbf{A})$.  The stacking operator for matrices is denoted as $\col({\mathbf{A}}_i)_{i=1}^N=[\mathbf{A}_1^\top \ \ \dots \ \ \mathbf{A}_N^\top]^\top$. The notation $\text{Tr}(\mathbf{A})$ represents the trace of the matrix $\mathbf{A}$. The operator $\text{vec}(\mathbf{A})$  denotes the vectorization of the matrix $\mathbf{A}$ obtained by stacking its columns into a single vector \citep[Page 60]{petersen2008matrix}. Denote with $\|\bullet\|_F=\|\text{vec}(\bullet)\|$ the Frobenius norm of a matrix. Let $\mathbf{A}\succ\mathbf{B}$ (resp. $\mathbf{A}\succeq\mathbf{B}$) be the denote when the matrix $\mathbf{A}-\mathbf{B}$ is positive definite (resp. positive semi-definite).

\section{Problem statement and preliminaries} \label{sec:guidelines}
In this work, we consider a team of $N$ monitoring agents in a networked system tasked to the estimation of a global vector of unknown constant parameters $\bm{\theta}\in\mathbb{R}^n$. Each agent is equipped with a local sensor that takes measurements of the form:
\begin{equation}\label{eq:original_system}
\mathbf{y}_i(t) = \mathbf{C}_i(t) \boldsymbol{\theta} 
\end{equation} 
where $\mathbf{y}_i(t) \in \mathbb{R}^{p_i}$ denotes the local sensor output of the $i$th agent and $\mathbf{C}_i(t) \in \mathbb{R}^{p_i \times n}$ is a time-varying regressor matrix known by agent $i\in\{1,\dots,N\}$. We focus on the nontrivial case in which the information available to each agent is insufficient to identify $\bm{\theta}$ on its own. In particular, for each agent $i$, the local regression matrix $\mathbf{C}_i(t)$ may be rank deficient over time, so that multiple parameter values are compatible with the local measurements, thus requiring cooperation among agents to recover $\bm{\theta}$.

In this framework, agents share information through a communication network modeled as a connected, undirected graph $\mathcal{G}=(\mathcal{V},\mathcal{E})$, with node set $\mathcal{V}=\{1,\dots,N\}$ and edge set $\mathcal{E}\subseteq \mathcal{V}\times\mathcal{V}$. An edge $(i,j)\!\in\!\mathcal{E}$ indicates that agents $i$ and $j$ share information between them. For network analysis purposes, let $\mathbf{A}=[a_{ij}] \in \mathbb{R}^{N \times N}$ denote the adjacency matrix of $\mathcal{G}$, with entries   $a_{ij}=a_{ji}=1$ when $(i,j)\in\mathcal{E}$ and $a_{ij}=a_{ji}=0$ otherwise. The degree matrix of $\mathcal{G}$ is denoted by the diagonal matrix $\mathbf{D}\in\mathbb{N}^{N\times N}$ with $d_{ii}=\sum_{j\neq i} a_{ij}$ and the Laplacian matrix is defined as $\mathbf{L}=\mathbf{D}-\mathbf{A}$. The set of neighbors of agent $i$ is denoted as $\mathcal{N}_i \subseteq \mathcal{V}$. The smallest nonzero eigenvalue $\lambda_{\mathcal{G}}$ of $\mathbf{L}$, called the algebraic connectivity, measures graph connectivity. 

The next subsection recalls the centralized parameter estimation version, where we highlight that the PE of a collective regressor matrix is sufficient to guarantee the exponential convergence of the gradient estimator to the true parameters. This is relevant since a similar result will be derived in our distributed proposal.

\subsection{Centralized estimation}
In the centralized estimation setting, there exists a central station that collects all measurements $\mathbf{y}_i(t)$ as $\mathbf{\mathbf{y}}(t) =\mathbf{C}(t)\boldsymbol{\theta}$, with 
\begin{equation}\label{Eq:MatCentralizado}
    \mathbf{y}(t)= \text{col}(\mathbf{y}_i(t))_{i=1}^N,\quad \mathbf{C}(t)=  \text{col}(\mathbf{C}_i(t))_{i=1}^N
\end{equation}
where $\mathbf{C}(t)\in \mathbb{R}^{p \times n}$ and $\mathbf{y}(t) \in \mathbb{R}^{p}$, with $p=\sum_{i=1}^Np_i$. 

Equipped with the pair $\mathbf{y}(t), \ \mathbf{C}(t)$, the GE \citep{ioannou1996robust} can be used to obtain an estimate $\bm{\hat{\theta}}_c$ of the true parameter $\bm{\theta}$ using an iteration in the form of the following differential equation:
\begin{equation}\label{eq:cent_estim}
    \dot{\bm{\hat{\theta}}}_c =\bm{\Gamma}_c  \mathbf{C}(t)^\top (\mathbf{y}(t) - \mathbf{C}(t) \bm{\hat{\theta}}_c ),
\end{equation}
where the gain matrix $\bm{\Gamma}_c \in \mathbb{R}^{n\times n}$, which satisfies $\boldsymbol{0} \prec \bm{\Gamma}_c$, determines the convergence rate of the estimation dynamics and can be tuned to accelerate convergence. It can be shown that \eqref{eq:cent_estim} corresponds to the gradient flow dynamics which minimizes the instantaneous estimation error: 
\begin{equation}
     J_c(\bm{\hat{\theta}}_c;t)=  \frac{1}{2}\|\mathbf{y}(t)-\mathbf{C} (t) \bm{\hat{\theta}}_c\|^2.
\end{equation}
To study when \eqref{eq:cent_estim} converges to the true parameter $\bm{\theta}$, the following standard definition is used:
\begin{definition}\label{def:PE}\citep[Adapted from Definition 4.3.1]{ioannou1996robust} The matrix function $\mathbf{F}:\mathbb{R}^{+}\to\mathbb{R}^{m\times n}$ is said to comply with the Persistency of Excitation (PE) condition if there exists $\alpha,T>0$ such that for all $t\geq 0$:
    \begin{equation}\label{eq:PE}
        \int_t^{t+T} \mathbf{F} (\tau)^\top\mathbf{F}(\tau) \text{d}\tau \succeq \ \alpha \mathbf{I}_n.
    \end{equation}
We denote a function $\mathbf{F}$ that satisfies the PE condition as $\mathbf{F}\in\text{PE}(\alpha, T)$ when the parameters $\alpha$ and $T$ are relevant and as $\mathbf{F}\in\text{PE}$ otherwise.
\end{definition}

The estimation error $\boldsymbol{\tilde{\theta}}_c = \boldsymbol{\hat{\theta}}_c -\boldsymbol{\theta}$ satisfies the following time-varying dynamical system:
\begin{equation}\label{eq:error_sge}
\begin{split}
    \dot{\tilde{\boldsymbol{\theta}}}_c = -\bm{\Gamma}_c \mathbf{C}(t)^\top \mathbf{C}(t) \tilde{\boldsymbol{\theta}}_c.
\end{split}
\end{equation}
The next result establishes the convergence of \eqref{eq:error_sge} to the origin when $\mathbf{C}\in \text{PE}$.
\begin{prop} \label{prop:error_PE} \citep[Adapted from Theorem 4.3.2]{ioannou1996robust}
Let \eqref{eq:error_sge} be the dynamics of the time-varying error, with $\bm{\Gamma}_c\succ \boldsymbol{0}$ and $\mathbf{C}\in\text{PE}$, then, the origin of \eqref{eq:error_sge} is globally uniformly exponentially stable.
\end{prop}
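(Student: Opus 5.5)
We use a Lyapunov argument with the weighted quadratic function $V(\tilde{\bm\theta}_c)=\tfrac12\tilde{\bm\theta}_c^\top\bm\Gamma_c^{-1}\tilde{\bm\theta}_c$, which is well defined since $\bm\Gamma_c\succ\boldsymbol 0$ and satisfies $\tfrac{1}{2\lambda_{\max}(\bm\Gamma_c)}\|\tilde{\bm\theta}_c\|^2\le V\le \tfrac{\lambda_{\max}(\bm\Gamma_c^{-1})}{2}\|\tilde{\bm\theta}_c\|^2$. Along \eqref{eq:error_sge} we get $\dot V=-\|\mathbf C(t)\tilde{\bm\theta}_c\|^2\le 0$. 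Hence $\|\tilde{\bm\theta}_c(t)\|$ is uniformly bounded by a constant multiple of $\|\tilde{\bm\theta}_c(t_0)\|$, which gives uniform stability. It remains to show that $V$ decreases by a fixed fraction over every window of length $T$. We assume, as is standard in \citep{ioannou1996robust}, that $\mathbf C$ is bounded and piecewise continuous, say $\|\mathbf C(t)\|\le c$. Without this, PE alone does not control the transient.

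\textbf{Key step: window decrease.} Integrating over $[t,t+T]$ gives $V(t+T)-V(t)=-\int_t^{t+T}\|\mathbf C(\tau)\tilde{\bm\theta}_c(\tau)\|^2d\tau$. Write $\tilde{\bm\theta}_c(\tau)=\tilde{\bm\theta}_c(t)+\bm\delta(\tau)$ with
\[
\bm\delta(\tau)=-\int_t^\tau\bm\Gamma_c\mathbf C(s)^\top\mathbf C(s)\tilde{\bm\theta}_c(s)\,ds .
\]
Using $\|a+b\|^2\ge\tfrac12\|a\|^2-\|b\|^2$ and the PE inequality \eqref{eq:PE}, we obtain
\[
\int_t^{t+T}\|\mathbf C\tilde{\bm\theta}_c\|^2\ge \tfrac{\alpha}{2}\|\tilde{\bm\theta}_c(t)\|^2-\int_t^{t+T}\|\mathbf C(\tau)\bm\delta(\tau)\|^2d\tau .
\]
To bound the perturbation, apply Cauchy--Schwarz:
\[
\|\bm\delta(\tau)\|^2\le \|\bm\Gamma_c\|^2c^2 T\int_t^{t+T}\|\mathbf C(s)\tilde{\bm\theta}_c(s)\|^2ds .
\]
Therefore $\int_t^{t+T}\|\mathbf C\bm\delta\|^2\le \|\bm\Gamma_c\|^2c^4T^2 I$, where $I:=\int_t^{t+T}\|\mathbf C\tilde{\bm\theta}_c\|^2$. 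Rearranging, $(1+\|\bm\Gamma_c\|^2c^4T^2)\,I\ge\tfrac{\alpha}{2}\|\tilde{\bm\theta}_c(t)\|^2$. This yields
\[
V(t+T)\le(1-\gamma)V(t),\qquad \gamma=\frac{\alpha}{\lambda_{\max}(\bm\Gamma_c^{-1})(1+\|\bm\Gamma_c\|^2c^4T^2)}\in(0,1].
\]
If $\gamma=1$, we can simply replace it by a smaller admissible value.

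\textbf{Conclusion.} Iterating over consecutive windows and using the monotonicity of $V$ between window endpoints gives $V(t)\le(1-\gamma)^{-1}e^{-\frac{|\ln(1-\gamma)|}{T}(t-t_0)}V(t_0)$ for all $t\ge t_0\ge0$. Here $\gamma$ does not depend on $t_0$ or on the initial condition. Converting back through the quadratic bounds on $V$ gives global uniform exponential stability of the origin of \eqref{eq:error_sge}.

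The main obstacle is the key step: PE controls $\mathbf C$ acting on a \emph{frozen} vector, while the state moves during the window. The perturbation bound closes this gap by controlling the motion through the dissipation integral itself, and it is exactly here that uniform boundedness of $\mathbf C$ is needed.
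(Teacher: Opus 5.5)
Your proof is correct. It takes essentially the standard route, with one difference in packaging and one point worth making explicit.

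\textbf{Comparison with the paper.} The paper does not prove this proposition; it only cites \citep[Theorem 4.3.2]{ioannou1996robust}. The standard proof there uses the same Lyapunov function $\tilde{\bm\theta}_c^\top\bm\Gamma_c^{-1}\tilde{\bm\theta}_c$. It obtains the decrease over windows from uniform complete observability of the pair $(\mathbf C,-\bm\Gamma_c\mathbf C^\top\mathbf C)$, which follows from PE because observability is invariant under output injection. Your splitting $\tilde{\bm\theta}_c(\tau)=\tilde{\bm\theta}_c(t)+\bm\delta(\tau)$, with $\bm\delta$ bounded through the dissipation integral itself, is that output-injection lemma done by hand. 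The substance is the same, but your version is self-contained. It also gives an explicit contraction factor $\gamma$ in terms of $\alpha$, $T$, $c$ and $\bm\Gamma_c$, which the citation does not. That could be used to quantify how the local GE rate in Theorem~\ref{the:gain_PE} degrades as the PE level $\varphi$ of $\hat{\mathbf C}_i$ shrinks.

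\textbf{The boundedness hypothesis.} Adding boundedness of $\mathbf C$ is not a weakness of your proof; it is a needed repair of the statement. Definition~\ref{def:PE} contains only the lower bound, whereas the cited source works with bounded regressors. Without some such bound the proposition fails. Take $\mathbf C(t)=\sqrt{a}\,[\cos\omega t\ \ \sin\omega t]$ and $\bm\Gamma_c=\mathbf I_2$. In rotating coordinates the error dynamics become LTI with characteristic polynomial $s^2+as+\omega^2$. The slowest decay rate is then about $\omega^2/a$, while the PE level $a\pi/\omega$ over $T=2\pi/\omega$ grows with $a$. Letting $a$ increase piecewise in time, on longer and longer intervals, keeps $\mathbf C\in\text{PE}$ but destroys uniformity.

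Two minor remarks on your hypotheses:
\begin{itemize}
\item Your argument needs only $\sup_t\int_t^{t+T}\|\mathbf C\|^2<\infty$: in your Cauchy--Schwarz step, replace $c^2T$ by this integral.
\item You implicitly use $\bm\Gamma_c=\bm\Gamma_c^\top$, which is the standard convention.
\end{itemize}
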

\begin{rem}
Proposition~\ref{prop:error_PE} shows that $\mathbf{C} \in $ PE is sufficient for exponential convergence of the centralized estimator, while, in general, $\mathbf{C}_i\notin\text{PE}$ for individual sensors. Thus $\mathbf{C}\in$ PE has become a standard assumption on centralized and cooperative settings. This condition is referred to as \emph{cooperative persistency of excitation (cPE)} \citep{Chen2014PEcoope,yan2025distributed,lorenz2025robust}. Requiring excitation at the local level, is significantly stronger and is only considered in specific settings, such as \citep{RO2021ESt_dis}. This motivates distributed mechanisms that exploit cooperation to recover cPE properties of the centralized model.
\end{rem}

\begin{figure}[t]
    \centering
    \includegraphics[width=0.7\linewidth]{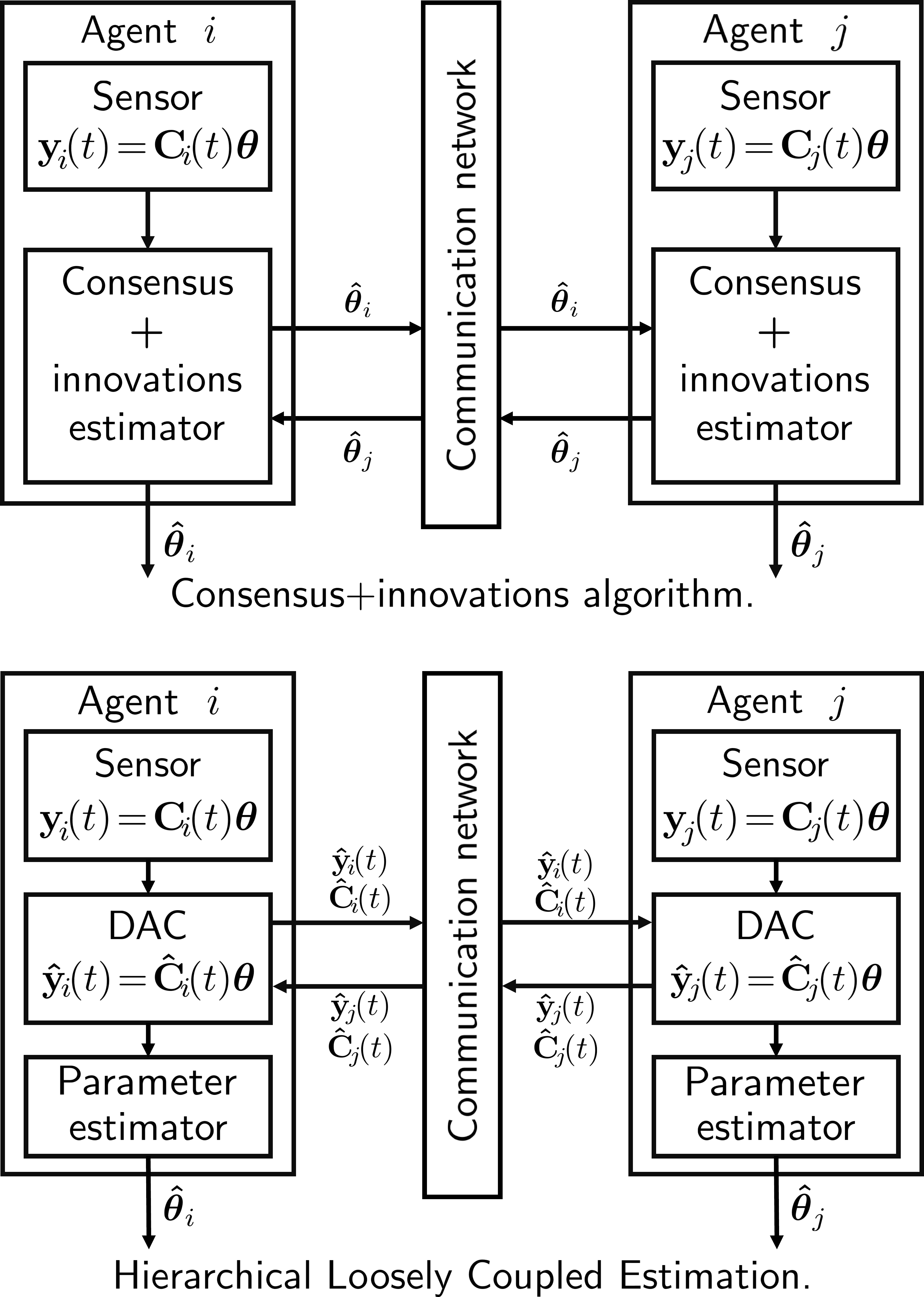}
    \caption{Distributed parameter estimator framework. (Top) Scheme of the consensus+innovations algorithm, implemented within the same block. (Bottom) Scheme of the proposed approach, where consensus and parameter estimation are decoupled.}
    \label{fig:network}
\end{figure}

\section{Hierarchical Loosely Coupled Estimation}\label{sec:Hierar_loos}
The proposed hierarchical framework equips each agent with a Dynamic Average Consensus (DAC) \citep{kia2019tutorial} block and a local parameter estimation block (see Figure \ref{fig:network}-Bottom). This structure decouples consensus dynamics from estimator convergence, avoiding the tight interdependence between them, observed in existing approaches. As a consequence, the framework is estimator-agnostic being capable of supporting different estimation schemes (e.g., GE, DREM), depending on the problem characteristics. In this section, we employ the GE scheme as a baseline, whereas the DREM estimator is introduced in Section \ref{sec:DREM}.

The first problem to tackle is the dimension mismatch among agents. For this purpose, the measurement in (\ref{eq:original_system}) is premultiplied as
\begin{equation}\label{eq:cons_i}
\mathbf{y}_i'(t) = \mathbf{C}_i(t)^\top\mathbf{y}_i(t) = \mathbf{C}_i'(t)\bm{\theta},
\end{equation}
where $\mathbf{y}_i'(t)\in\mathbb{R}^n$ and $\mathbf{C}_i'(t) = \mathbf{C}_i(t)^\top \mathbf{C}_i(t)$. Now $\mathbf{C}_i'(t) \in \mathbb{R}^{n\times n}$, and $\mathbf{y}_i'(t)$ have the same dimension in every agent. The  outputs $\mathbf{y}_i'(t)$ and the regressor matrices $\mathbf{C}_i'(t)$ of the agents are arranged in matrices  
\begin{equation}\label{eq:Matrix_centralized_consensus_regularized}
\mathbf{y}'(t)=\text{col}(\mathbf{y}'_i(t))_{i=1}^N \quad \mathbf{C}'(t)=\text{col}(\mathbf{C}'_i(t))_{i=1}^N\end{equation}

Equations $\mathbf{y}'(t) = \mathbf{C}'(t)\bm{\theta}$ and $\mathbf{y}(t) = \mathbf{C}(t)\bm{\theta}$ from  \eqref{Eq:MatCentralizado} have the same solutions, as stated in the following proposition.

\begin{prop}\label{prop:prime_equivalence}
    Let $\mathbf{y}(t),\mathbf{C}(t)$ and $\mathbf{y}'(t),\mathbf{C}'(t)$ be defined as in \eqref{Eq:MatCentralizado} and \eqref{eq:Matrix_centralized_consensus_regularized}, respectively. 
    Then, any $\bm{\theta} \in \mathbb{R}^n$ is a solution $\mathbf{y}'(t) = \mathbf{C}'(t) \bm{\theta}$ if and only if $\bm{\theta}$ is a solution to $\mathbf{y}(t)=\mathbf{C}(t)\bm{\theta}$.
\end{prop}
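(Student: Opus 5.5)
The plan is to work block by block, at a fixed but arbitrary time $t$, and reduce the claim to the per-agent equivalence $\mathbf{C}_i'(t)\bm{\theta}=\mathbf{y}_i'(t) \iff \mathbf{C}_i(t)\bm{\theta}=\mathbf{y}_i(t)$. Because of the stacking structure in \eqref{Eq:MatCentralizado} and \eqref{eq:Matrix_centralized_consensus_regularized}, $\mathbf{y}'(t)=\mathbf{C}'(t)\bm{\theta}$ holds if and only if it holds in every block $i=1,\dots,N$, and the same is true for $\mathbf{y}(t)=\mathbf{C}(t)\bm{\theta}$. So the per-agent equivalence suffices. To avoid clashing with the true parameter in \eqref{eq:original_system}, I will write $\bm{\theta}^\ast$ for the true parameter and $\bm{\theta}$ for a candidate solution, and drop the explicit time dependence.

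The direction ``$\Leftarrow$'' is immediate. If $\mathbf{C}_i\bm{\theta}=\mathbf{y}_i$, premultiply by $\mathbf{C}_i^\top$ to obtain $\mathbf{C}_i^\top\mathbf{C}_i\bm{\theta}=\mathbf{C}_i^\top\mathbf{y}_i$, which is exactly $\mathbf{C}_i'\bm{\theta}=\mathbf{y}_i'$ by \eqref{eq:cons_i}.

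The direction ``$\Rightarrow$'' is the only real point, since premultiplication by $\mathbf{C}_i^\top$ can enlarge the solution set in general. The normal equations of an inconsistent system have least-squares solutions that do not solve the original system. So this step must use that the data are generated by the model \eqref{eq:original_system}, namely $\mathbf{y}_i=\mathbf{C}_i\bm{\theta}^\ast$.

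Assume $\mathbf{C}_i^\top\mathbf{C}_i\bm{\theta}=\mathbf{C}_i^\top\mathbf{y}_i=\mathbf{C}_i^\top\mathbf{C}_i\bm{\theta}^\ast$, and set $\mathbf{e}=\bm{\theta}-\bm{\theta}^\ast$. Then $\mathbf{C}_i^\top\mathbf{C}_i\mathbf{e}=\boldsymbol{0}$. Premultiplying by $\mathbf{e}^\top$ gives $\|\mathbf{C}_i\mathbf{e}\|^2=0$, i.e. $\ker(\mathbf{C}_i^\top\mathbf{C}_i)=\ker\mathbf{C}_i$. Hence $\mathbf{C}_i\bm{\theta}=\mathbf{C}_i\bm{\theta}^\ast=\mathbf{y}_i$.

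Combining both directions over all $i$ yields the claim. The same argument could also be written at the stacked level using $\mathbf{C}'^\top$-type identities, but the blockwise version is cleaner. I will remark that consistency of the measurements is the essential hypothesis used in the ``$\Rightarrow$'' direction.
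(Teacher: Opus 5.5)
Your proof is correct. Its core mechanism matches the paper's: consistency of the data, plus the fact that premultiplying by a transpose loses nothing on the range of the matrix.

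The paper phrases this through $\mathbf{w}=\mathbf{y}-\mathbf{C}\bm{\theta}$. It notes $\mathbf{w}\in\ker\mathbf{C}^\top=\text{Im}(\mathbf{C})^\perp$, and also $\mathbf{w}\in\text{Im}(\mathbf{C})$ because $\mathbf{y}\in\text{Im}(\mathbf{C})$, which forces $\mathbf{w}=\bm{0}$. Your identity $\ker(\mathbf{C}_i^\top\mathbf{C}_i)=\ker\mathbf{C}_i$, obtained from $\mathbf{e}^\top\mathbf{C}_i^\top\mathbf{C}_i\mathbf{e}=\|\mathbf{C}_i\mathbf{e}\|^2$, is the same fact seen through the quadratic form.

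The genuine difference is the level of aggregation. The paper writes $\mathbf{y}'=\mathbf{C}^\top\mathbf{y}$ and $\mathbf{C}'=\mathbf{C}^\top\mathbf{C}$. It is therefore working with the block sums $(\bm{1}^\top\otimes\mathbf{I}_n)\mathbf{y}'$ and $(\bm{1}^\top\otimes\mathbf{I}_n)\mathbf{C}'$, not with the stacked $nN\times n$ matrix of \eqref{eq:Matrix_centralized_consensus_regularized}. Your blockwise argument matches the statement as literally written. It also yields the local fact that each agent's pair $(\mathbf{y}_i',\mathbf{C}_i')$ has the same solution set as $(\mathbf{y}_i,\mathbf{C}_i)$.

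The paper's aggregated version is the one relevant to the averaged regression $\overline{\mathbf{y}}=\overline{\mathbf{C}}\bm{\theta}$ targeted by the DAC layer, where $\overline{\mathbf{C}}=\mathbf{C}^\top\mathbf{C}/N$ and $\overline{\mathbf{y}}=\mathbf{C}^\top\mathbf{y}/N$. Your technique covers that case with one extra line: $\sum_i\mathbf{C}_i^\top\mathbf{C}_i\mathbf{e}=\bm{0}$ gives $\sum_i\|\mathbf{C}_i\mathbf{e}\|^2=0$, hence $\mathbf{C}_i\mathbf{e}=\bm{0}$ for every $i$.

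Your separation of the candidate $\bm{\theta}$ from the true $\bm{\theta}^\ast$ is apt. So is your remark that consistency is what drives the ``$\Rightarrow$'' direction; the paper invokes it only as $\mathbf{y}(t)\in\text{Im}(\mathbf{C}(t))$.
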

All proofs are placed in Appendix \ref{sec:proofs}.

Next, we specify the consensus dynamics for the surrogate regressions $\mathbf{C}'_i(t)$ together with the distributed estimation of the parameter vector $\bm{\theta}$. For this purpose, we employ \eqref{eq:cons_i} in place of \eqref{eq:original_system}, since, as shown in Proposition~\ref{prop:prime_equivalence}\ , both formulations share the same solution set for $\bm{\theta}$. Based on this representation, we introduce the following algorithm, which integrates the consensus and distributed estimation layers.
 \begin{subequations}\label{alg:distributed_estimator}
\begin{align}
    &\textbf{Dynamic Average Consensus:} \nonumber\\
    &\dot{\mathbf{X}}_i(t) = k \sum_{j\in\mathcal{N}_i}\big(\hat{\mathbf{C}}_i(t)-\hat{\mathbf{C}}_j(t)\big)
    \label{eq:Consensus_P}\\
    &\dot{\mathbf{x}}_i(t) = k \sum_{j\in\mathcal{N}_i}\big(\hat{\mathbf{y}}_i(t)-\hat{\mathbf{y}}_j(t)\big)\label{eq:consensus_x}\\
    &\textbf{Consensus output:} \nonumber\\
    &\hat{\mathbf{C}}_i(t) = \mathbf{C}_i'(t) - \mathbf{X}_i(t), \label{eq:Consensus_output}\\
    &\hat{\mathbf{y}}_i(t) = \mathbf{y}_i'(t) - \mathbf{x}_i(t), \label{eq:consensus_output_y} \\
    &\textbf{Local parameter estimation:} \nonumber\\
    &\dot{\boldsymbol{\hat{\theta}}}_i(t) 
      = \bm{\Gamma}_i \hat{\mathbf{C}}_i(t)^\top\big(\hat{\mathbf{y}}_i(t) - \hat{\mathbf{C}}_i(t)\boldsymbol{\hat{\theta}}_i(t)\big) \label{eq:alg_SGE}
\end{align}
\end{subequations}
where $k>0$, $\boldsymbol{0} \prec\bm{\Gamma}_i\in\mathbb{R}^{n\times n}$ are design parameters. The gains $k$ and $\bm{\Gamma}_i$ determine the convergence rates of the consensus and estimation dynamics respectively, and can be tuned to accelerate convergence. As we show below, an appropriate selection of $k$ also ensures the PE property of the estimator regressor matrix. In addition, we restrict the algorithm to the standard DAC initialization $
    \sum_{i=1}^N\mathbf{X}_i(0) = \bm{0},\quad \sum_{i=1}^N\mathbf{x}_i(0) = \bm{0}$
with trivial setting $\mathbf{X}_i(0)=\bm{0}$ and $\mathbf{x}_i(0)=\bm{0}$. 
\begin{rem}
\label{rem:intuition}
The DAC block in \eqref{eq:Consensus_P}-\eqref{eq:consensus_output_y} allows agents to cooperatively reconstruct global quantities of interest that would otherwise require centralized aggregation, producing the surrogates $\hat{\mathbf{C}}_i(t)$ and $\hat{\mathbf{y}}_i(t)$. 
The outputs of the DAC algorithm in \eqref{eq:Consensus_output}--\eqref{eq:consensus_output_y} at each agent track the time-varying average of the reference signals. Ideally, the DAC dynamics aim to enforce $\hat{\mathbf{C}}_i(t)=\overline{\mathbf{C}}(t)$ and $\hat{\mathbf{y}}_i(t)=\overline{\mathbf{y}}(t)$, where
    \begin{equation}\label{eq:DAC_limit}
    \overline{\mathbf{C}}(t) = \frac{1}{N}\sum_{j=1}^N \mathbf{C}_j'(t), 
    \qquad 
    \overline{\mathbf{y}}(t) = \frac{1}{N}\sum_{j=1}^N \mathbf{y}_j'(t),
    \end{equation}
    equally  \begin{equation}\label{Eq:regresor}
        \hat{\mathbf{y}}_i(t) = \hat{\mathbf{C}}_i(t)\bm{\theta}
    \end{equation} 
    which motivates the structure of the second block, namely the local parameter estimator in \eqref{eq:alg_SGE}.
\end{rem}
The following assumption is required to show the convergence of \eqref{alg:distributed_estimator}, with similar analogs in the literature \citep{lorenz2025robust}:
\begin{assumption}\label{as:bounds}
There exists known $\beta, \gamma>0$ with $\gamma<\infty$, such that:
 $$\text{sup}_{\tau\in [t,\infty)}\!\|(\mathbf{H}\otimes \mathbf{I}_n)\dot{\mathbf{{C}}}'(\tau)\|\leq \gamma,\quad\|\mathbf{\overline{C}}(t)\|\leq \beta $$
 for all $t\geq 0$, where $
\mathbf{H} = \mathbf{I}_N - \bm{1}\bm{1}^\top/N.$
\end{assumption}

\begin{rem}
\label{rem:as:bound}
Assumption~\ref{as:bounds} is a sufficient regularity condition introduced to enable the analytical development and convergence guarantees of the proposed estimators. It is satisfied by a broad class of practically relevant regressors, including those generated by bounded linear systems and structured deterministic signals, as commonly assumed in adaptive and distributed estimation settings \citep{ioannou1996robust,papusha2014collaborative,RO2021ESt_dis,lorenz2025robust}. Consequently, the theoretical guarantees derived in this work may not apply to all classes of regressors, and extending the analysis beyond the considered class is left for future work.
\end{rem}

The following theorem shows that it is not necessary for the consensus outputs, $\mathbf{\hat{C}}_i(t)$ and $\mathbf{\hat{y}}_i(t)$, to be exactly equal to \eqref{eq:DAC_limit}. Instead, suitable design conditions ensure that the output $\hat{\bm{\theta}}_i(t)$ of \eqref{alg:distributed_estimator} converges to the true parameter $\bm{\theta}$.

\begin{thm}\label{the:gain_PE}
    Suppose Assumption \ref{as:bounds} holds and let $\mathbf{C}(t)$, defined in \eqref{Eq:MatCentralizado}, be $\text{PE}(\alpha,T)$ for some known $\alpha,T>0$. If in Algorithm \eqref{alg:distributed_estimator} the consensus gain $k$ satisfies
    \begin{equation}\label{eq:gain_k}
        k \; > \; \frac{2 n N^2\beta \gamma T^2}{\lambda_\mathcal{G} \alpha^2},
    \end{equation}
    then the origin of the estimation error $\hat{\bm{\theta}}_i(t)-\bm{\theta}$ is globally uniformly exponentially stable for all $i\in\{1,\dots,N\}$.
\end{thm}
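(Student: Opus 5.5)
The plan is to separate the two layers and reduce the problem to Proposition~\ref{prop:error_PE} applied agent-wise. First observe a structural fact of the DAC layer: because $\mathbf{y}_i'=\mathbf{C}_i'\bm{\theta}$ and the dynamics \eqref{eq:Consensus_P}--\eqref{eq:consensus_x} are linear and identical for both the matrix and vector channels, the quantity $\mathbf{e}_i=\hat{\mathbf{y}}_i-\hat{\mathbf{C}}_i\bm{\theta}$ satisfies $\dot{\mathbf{e}}_i=-k\sum_{j}(\mathbf{e}_i-\mathbf{e}_j)$ after differentiating \eqref{eq:Consensus_output}--\eqref{eq:consensus_output_y}. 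With the trivial initialization $\mathbf{X}_i(0)=\bm 0,\mathbf{x}_i(0)=\bm 0$ we have $\mathbf{e}_i(0)=\mathbf{y}_i'(0)-\mathbf{C}_i'(0)\bm\theta=\bm 0$. Hence $\mathbf{e}_i\equiv\bm 0$, i.e.\ \eqref{Eq:regresor} holds exactly for all $t$, independently of whether consensus has been reached. Consequently the error $\tilde{\bm\theta}_i=\hat{\bm\theta}_i-\bm\theta$ obeys $\dot{\tilde{\bm\theta}}_i=-\bm\Gamma_i\hat{\mathbf{C}}_i^\top\hat{\mathbf{C}}_i\tilde{\bm\theta}_i$, which has exactly the form \eqref{eq:error_sge}. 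It therefore suffices to show $\hat{\mathbf{C}}_i\in\text{PE}$ for each $i$.

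Second, I quantify the DAC tracking error. Write $\hat{\mathbf{C}}_i=\overline{\mathbf{C}}+\bm\Delta_i$. Stacking and using $\sum_i\mathbf{X}_i(0)=\bm 0$ (which is preserved), the disagreement $\bm\Delta=\col(\bm\Delta_i)$ satisfies $\dot{\bm\Delta}=-k(\mathbf{L}\otimes\mathbf{I}_n)\bm\Delta+(\mathbf{H}\otimes\mathbf{I}_n)\dot{\mathbf{C}}'$, with $\bm\Delta$ in the disagreement subspace. On that subspace $\mathbf{L}$ has its smallest eigenvalue $\lambda_{\mathcal G}$. Variation of constants and Assumption~\ref{as:bounds} give a bound of the form $\|\bm\Delta(t)\|\le e^{-k\lambda_{\mathcal G}t}\|\bm\Delta(0)\|+\gamma/(k\lambda_{\mathcal G})$. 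Thus the ultimate bound scales as $1/k$; the transient I would either absorb by considering $t$ large enough or bound it uniformly.

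Third, I transfer PE. Since $\overline{\mathbf{C}}=\frac1N\mathbf{C}^\top\mathbf{C}$ and $\mathbf{C}\in\text{PE}(\alpha,T)$, a Cauchy--Schwarz step gives, for unit $v$,
\[
\int_t^{t+T}\|\overline{\mathbf{C}}v\|^2d\tau\ge \frac1T\Big(\int_t^{t+T}\tfrac1N v^\top\mathbf{C}^\top\mathbf{C}v\,d\tau\Big)^2\ge\frac{\alpha^2}{N^2T}.
\]
Then, expanding $\|\hat{\mathbf{C}}_iv\|^2\ge\|\overline{\mathbf{C}}v\|^2-2\|\overline{\mathbf{C}}\|\|\bm\Delta_i\|$ (up to a $\|\bm\Delta_i\|^2\ge0$ term) and using $\|\overline{\mathbf{C}}\|\le\beta$, I obtain
\[
\int_t^{t+T}\hat{\mathbf{C}}_i^\top\hat{\mathbf{C}}_i\succeq\Big(\tfrac{\alpha^2}{N^2T}-2\beta T\sup\|\bm\Delta_i\|\Big)\mathbf{I}_n.
\]
Positivity is exactly a condition $k>c\,N^2\beta\gamma T^2/(\lambda_{\mathcal G}\alpha^2)$. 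The factor $n$ in \eqref{eq:gain_k} presumably enters from converting Frobenius/vectorized norms of $\bm\Delta$ (the $n\times n$ blocks) to induced norms, or from a trace argument, so I will track norms via $\|\cdot\|_F$ and accept that factor.

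The main obstacle is this third step, in particular handling the initial transient so that the PE lower bound is uniform in $t\ge0$. Here the trivial initialization gives $\bm\Delta(0)=(\mathbf{H}\otimes\mathbf{I}_n)\mathbf{C}'(0)$, which is not zero. I would first try to absorb it via the strict inequality in \eqref{eq:gain_k}: PE on $[t_0,\infty)$ plus boundedness on $[0,t_0]$ still yields uniform exponential stability, possibly after enlarging $T$. Once $\hat{\mathbf{C}}_i\in\text{PE}$ is established, Proposition~\ref{prop:error_PE} concludes.
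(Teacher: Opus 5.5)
Your proposal is correct and follows essentially the same route as the paper: invariance of the regression $\hat{\mathbf y}_i=\hat{\mathbf C}_i\bm\theta$ under zero DAC initialization, reduction to PE of $\hat{\mathbf C}_i$ via Proposition~\ref{prop:error_PE}, the Cauchy--Schwarz PE bound $\frac{\alpha^2}{TN^2}$ for $\overline{\mathbf C}^2$, an ultimate bound of order $\gamma/(k\lambda_{\mathcal G})$ on the consensus error, and a $2\beta T\|\bm\Delta_i\|$ bound on the cross terms. Your variation-of-constants estimate drops the factor $n$, so the stated gain condition suffices with margin. Your window-enlargement step also correctly handles the initial transient: you take PE with window $T+t_0$ once $e^{-k\lambda_{\mathcal G}t}\|\bm\Delta(0)\|$ falls below the slack in the strict inequality. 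The paper's proof skips this by applying the post-$T_{\tilde{\mathbf C}}$ bound on every window.
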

\begin{rem}\label{rem:lambda}
    The lower bound for feasible values of $k$ in \eqref{eq:gain_k} can be conservatively estimated by overestimating $N$ and underestimating $\lambda_{\mathcal G}$. There exist distributed estimation approaches \citep{Montijano2011N_estimation,montijano2017fast} in which each agent independently estimates a value of $\lambda_G$.
     When the maximum number of nodes $N$ is known (for example, from network capacity specified by the communication protocol), the smallest possible $\lambda_{\mathcal G}$ for a connected graph can be used, which corresponds to that of a linear graph, i.e. $\lambda_{\mathcal G}=2(1-\cos(\pi/N))$ \citep[Table 3.2]{de2007old_Connect}. 
\end{rem}

Since inter-agent communication typically occurs over digital channels with finite data rates, we next account for bounded quantization effects in the consensus layer. In particular, we replace the DAC block in \eqref{eq:Consensus_P},\eqref{eq:consensus_x} of Algorithm~\eqref{alg:distributed_estimator}
by the following quantized version:
\begin{subequations}\label{alg:distributed_estimator:quantized}
\begin{align}
    &\textbf{Quantized Dynamic Average Consensus:} \nonumber\\
    &\dot{\mathbf{X}}_i(t)
    =k\!\!\sum_{j\in\mathcal{N}_{i}}\!
    \big(\mathcal{Q}(\hat{\mathbf{C}}_i(t))-\mathcal{Q}(\hat{\mathbf{C}}_j(t))\big),
    \label{eq:Consensus_P:quant}\\
    &\dot{\mathbf{x}}_i(t)
    =k\!\!\sum_{j\in\mathcal{N}_{i}}\!
    \big(\mathcal{Q}(\hat{\mathbf{y}}_i(t))-\mathcal{Q}(\hat{\mathbf{y}}_j(t))\big),
    \label{eq:Consensus_x:quant}
\end{align}
\end{subequations}
while keeping the consensus outputs
\eqref{eq:Consensus_output}, \eqref{eq:consensus_output_y}
and the local parameter estimator \eqref{eq:alg_SGE} unchanged. The quantization operator $\mathcal{Q}(\bullet)$ is defined componentwise as
\[
\mathcal{Q}(s)=\varepsilon\big\lfloor s/\varepsilon\big\rfloor,
\]
and is applied element-wise when acting on vectors or matrices, where $\varepsilon> 0$ denotes the size of the quantization step.
\begin{cor}\label{cor:quantization}
Let $\varepsilon> 0$ be the size of the quantization step. Assume that $\mathbf{C}(t)$ in ~\eqref{Eq:MatCentralizado} is $\text{PE}(\alpha,T)$ for some known $\alpha,T>0$, 
and Assumption~\ref{as:bounds} holds. Consider Algorithm~\eqref{alg:distributed_estimator} with the quantized consensus dynamics \eqref{eq:Consensus_P:quant}, \eqref{eq:Consensus_x:quant} and the local estimator \eqref{eq:alg_SGE}. If $k$ and $\varepsilon$ satisfy
\[
\frac{\alpha^2}{T N^2}
\;>\;
2\beta T\left(\frac{n\gamma}{k\lambda_{\mathcal G}}
+
\frac{n^2 \sqrt{N} \lambda_{\max}(\mathbf L)}{\lambda_{\mathcal G}}\,\varepsilon\right),
\]
then there exists $T_q>0$ such that the estimation error complies $\|\hat{\bm{\theta}}_i(t)-\bm{\theta}\|\leq B(\varepsilon)$ for all $t\geq T_q$, where $B(\bullet)$ is continuous, strictly increasing and comply $B(0)=0$. 
\end{cor}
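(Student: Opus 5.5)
We follow the proof of Theorem~\ref{the:gain_PE} and treat quantization as a bounded perturbation in the consensus layer. The proof has three steps: a disagreement bound, persistence of excitation of each surrogate regressor, and a perturbed gradient-estimator analysis.

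\textbf{Step 1: disagreement bound.} Let $\mathbf{E}_i(t)=\hat{\mathbf{C}}_i(t)-\overline{\mathbf{C}}(t)$ and $\mathbf{e}_i(t)=\hat{\mathbf{y}}_i(t)-\overline{\mathbf{y}}(t)$. Since $\mathbf{L}$ is symmetric, $\mathbf{1}^\top\mathbf{L}=\bm{0}$ even with quantization. With the initialization $\sum_i\mathbf{X}_i(0)=\bm{0}$, this gives $\sum_i\mathbf{X}_i(t)\equiv\bm{0}$, so the average of the $\hat{\mathbf{C}}_i$ is exactly $\overline{\mathbf{C}}$. Write $\mathcal{Q}(s)=s-\bm{\delta}(s)$ with $|\bm{\delta}|\le\varepsilon$ componentwise. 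Stacking and vectorizing, the disagreement $\mathbf{E}=\col(\mathbf{E}_i)$ satisfies
\begin{equation*}
\dot{\mathbf{E}}=-k(\mathbf{L}\otimes\mathbf{I}_n)\mathbf{E}+(\mathbf{H}\otimes\mathbf{I}_n)\dot{\mathbf{C}}'+k(\mathbf{L}\otimes\mathbf{I}_n)\bm{\Delta}.
\end{equation*}
Here $\bm{\Delta}$ stacks the quantization errors, so $\|\bm{\Delta}\|_F\le n\sqrt{N n}\,\varepsilon$ (up to the exact constant). The disagreement lies in the subspace orthogonal to $\mathbf{1}$, where $\mathbf{L}\succeq\lambda_{\mathcal G}$. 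A comparison/ISS argument then gives
\begin{equation*}
\limsup_{t\to\infty}\|\mathbf{E}(t)\|\le \frac{\gamma}{k\lambda_{\mathcal G}}+\frac{\lambda_{\max}(\mathbf{L})}{\lambda_{\mathcal G}}\,c\,\varepsilon .
\end{equation*}
In fact the bound holds after a finite time $T_1$ up to an arbitrarily small margin, which the strict inequality in the hypothesis absorbs. The $k$ cancels in the quantization term, which matches the hypothesis. The same bound holds for $\mathbf{e}$, with the $\mathbf{y}'$ derivative bounded analogously; the output $\mathbf{y}'$ is tied to $\mathbf{C}'$ through $\bm{\theta}$.

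\textbf{Step 2: PE of $\hat{\mathbf{C}}_i$.} Reuse the Theorem~\ref{the:gain_PE} argument. From $\mathbf{C}\in\text{PE}(\alpha,T)$ we get $\int_t^{t+T}\overline{\mathbf{C}}^{\,2}\,d\tau\succeq \frac{\alpha^2}{TN^2}\mathbf{I}_n$, using that $\overline{\mathbf{C}}=\mathbf{C}^\top\mathbf{C}/N$ and Jensen/Cauchy–Schwarz on the integral. Expand
\begin{equation*}
\hat{\mathbf{C}}_i^\top\hat{\mathbf{C}}_i=\overline{\mathbf{C}}^{\,2}+\overline{\mathbf{C}}\mathbf{E}_i+\mathbf{E}_i^\top\overline{\mathbf{C}}+\mathbf{E}_i^\top\mathbf{E}_i .
\end{equation*}
Bound the cross terms by $2\beta\|\mathbf{E}_i\|$ and drop the positive semidefinite last term. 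By Step 1 and the stated inequality on $k,\varepsilon$, for $t\ge T_1$,
\begin{equation*}
\int_t^{t+T}\hat{\mathbf{C}}_i^\top\hat{\mathbf{C}}_i\,d\tau\succeq\alpha'\mathbf{I}_n,\qquad \alpha'>0 .
\end{equation*}

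\textbf{Step 3: perturbed estimator.} Unlike the exact case, $\hat{\mathbf{y}}_i\ne\hat{\mathbf{C}}_i\bm{\theta}$. The residual is $\bm{\rho}_i=\hat{\mathbf{y}}_i-\hat{\mathbf{C}}_i\bm{\theta}=\mathbf{e}_i-\mathbf{E}_i\bm{\theta}$. Since $\overline{\mathbf{y}}=\overline{\mathbf{C}}\bm{\theta}$ exactly, $\bm{\rho}_i$ is bounded asymptotically by a multiple of $\varepsilon$ plus the $\gamma/k$ term. The error $\tilde{\bm{\theta}}_i$ then obeys
\begin{equation*}
\dot{\tilde{\bm{\theta}}}_i=-\bm{\Gamma}_i\hat{\mathbf{C}}_i^\top\hat{\mathbf{C}}_i\tilde{\bm{\theta}}_i+\bm{\Gamma}_i\hat{\mathbf{C}}_i^\top\bm{\rho}_i .
\end{equation*}
By Proposition~\ref{prop:error_PE}, applied from $T_1$ onward, the nominal part is uniformly exponentially stable. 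Its rate constants depend continuously on $\alpha'$ and on the bound on $\hat{\mathbf{C}}_i$. Standard ISS of exponentially stable linear time-varying systems then gives an ultimate bound proportional to $\sup\|\bm{\rho}_i\|$, and it is reached after some $T_q\ge T_1$.

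\textbf{Main obstacle.} The key difficulty is the residual term $\gamma/k$ inside $\bm{\rho}_i$, which does not vanish as $\varepsilon\to0$. The claim $B(0)=0$ requires showing the unquantized residual drives no error. In that case $\hat{\mathbf{y}}_i-\hat{\mathbf{C}}_i\bm{\theta}=\mathbf{y}_i'-\mathbf{C}_i'\bm{\theta}-(\mathbf{x}_i-\mathbf{X}_i\bm{\theta})$. The pair $(\mathbf{x}_i-\mathbf{X}_i\bm{\theta})$ obeys the same linear consensus dynamics driven by $\mathcal{Q}$-errors only, since $\mathbf{y}_i'=\mathbf{C}_i'\bm{\theta}$, and it starts at zero. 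Hence $\bm{\rho}_i$ depends only on the quantization mismatch: $\bm{\rho}=-(\mathbf{x}-(\mathbf{I}\otimes\ldots)\mathbf{X}\bm{\theta})$ is driven by $k\mathbf{L}$ times a quantity of order $\varepsilon(1+\|\bm{\theta}\|)$. Its ultimate bound is therefore $c'\varepsilon\lambda_{\max}(\mathbf{L})/\lambda_{\mathcal G}$, independent of $k$ and of $\gamma$. Combining this with Step 3 gives $B(\varepsilon)$ linear in $\varepsilon$, with a coefficient that is continuous in $\alpha'(\varepsilon)$ and increasing in $\varepsilon$. Strict monotonicity follows if we take $B$ as this explicit upper bound.
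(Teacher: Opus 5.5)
Your proposal is correct and follows essentially the same route as the paper. The common steps are: an ultimate bound on the disagreement in which the quantization error enters as a bounded input through $k(\mathbf{L}\otimes\mathbf{I}_n)$, so $k$ cancels; persistence of excitation of $\hat{\mathbf{C}}_i$ via the integral argument of Theorem~\ref{the:gain_PE}; the residual $\mathbf{X}_i\bm{\theta}-\mathbf{x}_i$ shown, as in Lemma~\ref{lem:regression_invariance}, to be driven only by quantization errors from a zero initial condition; and an ISS argument for the PE-driven gradient error system. The only difference is in that last step, where the paper uses the strong Lyapunov function of \citep{rueda2021strong} with \citep[Theorem 4.19]{khalil2002nonlinear} and you use a generic ISS bound for uniformly exponentially stable LTV systems. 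You should drop your first bound on $\bm{\rho}_i$ in Step 3 via $\mathbf{e}_i-\mathbf{E}_i\bm{\theta}$, which keeps a $\gamma/k$ term, and rely on the residual argument of your last paragraph instead, since that is what gives $B(0)=0$.
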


Corollary \ref{cor:quantization} can be extended to a switching network topology $\mathcal{G}_{\sigma(t)}$ with a switching signal $\sigma(t)\in\{1,\dots,q\}$, selecting among a family of connected undirected graphs $\{\mathcal{G}_1,\dots,\mathcal{G}_q\}$, by replacing Dynamic Average Consensus block in \eqref{eq:Consensus_P:quant} and \eqref{eq:Consensus_x:quant} with: 
\begin{subequations}\label{alg:distributed_estimator:switched}
\begin{align}
    &\textbf{Switched  Dynamic Average Consensus:} \nonumber\\
    &\dot{\mathbf{X}}_i(t)
    =k\!\!\sum_{j\in\mathcal{N}_{i,\sigma(t)}}\!
    \big(\mathcal{Q}(\hat{\mathbf{C}}_i(t))-\mathcal{Q}(\hat{\mathbf{C}}_j(t))\big),
    \label{eq:Consensus_P:switched}\\
    &\dot{\mathbf{x}}_i(t)
    =k\!\!\sum_{j\in\mathcal{N}_{i,\sigma(t)}}\!
    \big(\mathcal{Q}(\hat{\mathbf{y}}_i(t))-\mathcal{Q}(\hat{\mathbf{y}}_j(t))\big),
    \label{eq:Consensus_x:switched}
\end{align}
\end{subequations}
where $\mathcal{N}_{i,\sigma(t)}$ is the neighbor set of agent $i$.
\begin{cor}\label{cor:switched}
    Under the Assumptions of Corollary \ref{cor:quantization}, let  $\mathcal{G}_{\sigma(t)}$ be a switching network topology with $\mathcal{G}_\ell$, $\ell=1,\ldots,q$ a connected undirected graph and the switching signal $\sigma(t)$ satisfying a minimum dwell-time. Moreover, let $k$ and $\varepsilon$ satisfy
\[
\frac{\alpha^2}{T N^2}
\;>\;
2\beta T\left(\frac{n\gamma}{k\lambda_{\mathcal{G},m}}
+
\frac{n^2 \sqrt{N} \lambda_{\mathcal{G},M}}{\lambda_{\mathcal{G},m}}\,\varepsilon\right),
\]
where $\lambda_{\mathcal{G},M}=\max_{\ell \in \{1,\dots q\}} \lambda_{\max}(\mathbf L_\ell)$ and $\lambda_{\mathcal{G},m}=\min_{\ell \in \{1,\dots q\}}\lambda_\ell$ and $\lambda_\ell,\mathbf{L}_\ell$ are the algebraic connectivity and Laplacian of $\mathcal{G}_\ell$ respectively. Then, under
algorithm
\eqref{eq:Consensus_P:switched}, \eqref{eq:Consensus_x:switched} together with \eqref{eq:Consensus_output}, \eqref{eq:consensus_output_y} and \eqref{eq:alg_SGE}, there exists $T_q>0$ such that the estimation error complies $\|\hat{\bm{\theta}}_i(t)-\bm{\theta}\|\leq B(\varepsilon)$ for all $t\geq T_q$, where $B(\bullet)$ is continuous strictly increasing with $B(0)=0$. 
\end{cor}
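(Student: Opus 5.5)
The plan is to reuse the proof of Corollary~\ref{cor:quantization} almost verbatim and isolate the only place where the fixed topology enters: the bound on the disagreement of the quantized DAC layer. As there, write the consensus errors $\mathbf{E}_i(t)=\hat{\mathbf{C}}_i(t)-\overline{\mathbf{C}}(t)$ and $\mathbf{e}_i(t)=\hat{\mathbf{y}}_i(t)-\overline{\mathbf{y}}(t)$. The initialization $\sum_i\mathbf{X}_i(0)=\bm{0}$ is kept, and every $\mathbf{L}_\ell$ satisfies $\bm{1}^\top\mathbf{L}_\ell=\bm{0}$. Hence $\sum_i\mathbf{X}_i(t)\equiv\bm{0}$ for every switching signal, so $\sum_i\hat{\mathbf{C}}_i=N\overline{\mathbf{C}}$ and the stacked error lies in the disagreement subspace, $\mathbf{E}=(\mathbf{H}\otimes\mathbf{I}_n)\mathbf{E}$. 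Writing $\mathcal{Q}(s)=s-\bm{\delta}(s)$ with $\|\bm{\delta}\|$ componentwise at most $\varepsilon$, the stacked dynamics become
\[
\dot{\mathbf{E}}=-k(\mathbf{L}_{\sigma(t)}\otimes\mathbf{I}_n)\mathbf{E}+k(\mathbf{L}_{\sigma(t)}\otimes\mathbf{I}_n)\bm{\delta}+(\mathbf{H}\otimes\mathbf{I}_n)\dot{\mathbf{C}}'(t),
\]
and the $\mathbf{e}$ dynamics have the same form.

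The key step is to use the common Lyapunov function $V=\tfrac12\|\mathbf{E}\|_F^2$, which does not depend on $\sigma$. On the disagreement subspace, every connected $\mathcal{G}_\ell$ gives $\mathbf{E}^\top(\mathbf{L}_\ell\otimes\mathbf{I})\mathbf{E}\ge\lambda_\ell\|\mathbf{E}\|^2\ge\lambda_{\mathcal{G},m}\|\mathbf{E}\|^2$. The perturbations are bounded by $\|\mathbf{L}_\ell\|\le\lambda_{\mathcal{G},M}$ times $n\sqrt{N}\varepsilon$ (the Frobenius size of $\bm{\delta}$) and by $\gamma$ from Assumption~\ref{as:bounds}. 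Between switching instants this yields
\[
\tfrac{d}{dt}\|\mathbf{E}\|\le -k\lambda_{\mathcal{G},m}\|\mathbf{E}\|+k\lambda_{\mathcal{G},M}n\sqrt{N}\varepsilon+\gamma.
\]
$\|\mathbf{E}\|$ is continuous across switches, because the $\mathbf{X}_i$ are absolutely continuous and the right-hand side is merely discontinuous in $t$. The comparison lemma therefore applies on all of $[0,\infty)$ regardless of $\sigma$, giving the ultimate bound
\[
\limsup_{t\to\infty}\|\mathbf{E}\|\le \frac{\gamma}{k\lambda_{\mathcal{G},m}}+\frac{\lambda_{\mathcal{G},M}n\sqrt{N}}{\lambda_{\mathcal{G},m}}\varepsilon
\]
with exponential rate $k\lambda_{\mathcal{G},m}$. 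This is exactly the fixed-graph bound with $\lambda_{\mathcal G}\to\lambda_{\mathcal{G},m}$ and $\lambda_{\max}(\mathbf L)\to\lambda_{\mathcal{G},M}$. The dwell-time assumption is used only to exclude Zeno switching, so that Carathéodory solutions exist. Discontinuities of $\mathcal{Q}$ also require interpreting solutions in the Filippov/Carathéodory sense, exactly as in Corollary~\ref{cor:quantization}.

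From here the argument of Corollary~\ref{cor:quantization} (and Theorem~\ref{the:gain_PE}) carries over unchanged, since it uses the consensus layer only through this ultimate bound. First, write $\hat{\mathbf{C}}_i=\overline{\mathbf{C}}+\mathbf{E}_i$ and use $\overline{\mathbf{C}}=\tfrac1N\mathbf{C}^\top\mathbf{C}$. Then $\int_t^{t+T}\hat{\mathbf{C}}_i^\top\hat{\mathbf{C}}_i\succeq\int_t^{t+T}\overline{\mathbf{C}}^2-2\beta T\sup\|\mathbf{E}_i\|\,\mathbf{I}$. Since $\mathbf{C}\in\text{PE}(\alpha,T)$, Jensen/Cauchy--Schwarz gives $\int\overline{\mathbf{C}}^2\succeq\frac{\alpha^2}{TN^2}\mathbf{I}$. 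The hypothesis on $k,\varepsilon$ (with the $n$ factors inherited from norm conversions) then gives, after some $T_q$, that $\hat{\mathbf{C}}_i\in\text{PE}$ uniformly in $i$. Second, the local GE error obeys $\dot{\tilde{\bm\theta}}_i=-\bm{\Gamma}_i\hat{\mathbf{C}}_i^\top\hat{\mathbf{C}}_i\tilde{\bm\theta}_i+\bm{\Gamma}_i\hat{\mathbf{C}}_i^\top(\hat{\mathbf{y}}_i-\hat{\mathbf{C}}_i\bm\theta)$. By Proposition~\ref{prop:error_PE}, the nominal part is uniformly exponentially stable. The residual input $\hat{\mathbf{y}}_i-\hat{\mathbf{C}}_i\bm\theta=\mathbf{e}_i-\mathbf{E}_i\bm\theta$ is ultimately bounded by a quantity linear in $\varepsilon$ once the $\gamma/k$ transients vanish. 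This gives the ultimate bound $B(\varepsilon)$, continuous, increasing, and with $B(0)=0$, via input-to-state stability of exponentially stable linear time-varying systems.

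I expect the main obstacle to be the switching step: confirming that no dwell-time-dependent overshoot constant appears. This relies on all $\mathcal{G}_\ell$ sharing the same quadratic Lyapunov function on the common invariant subspace $\mathbf{1}^\perp$. A second, smaller concern is the claim that the $\dot{\mathbf{C}}'$ bound $\gamma$ still controls the forcing term for every mode, which holds because $\mathbf{H}$ does not depend on $\sigma$.
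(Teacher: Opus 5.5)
Your switched consensus layer follows the paper's own route. You use a graph-independent quadratic Lyapunov function on the disagreement subspace, with $\lambda_{\mathcal{G},m}$ for the decay and $\lambda_{\mathcal{G},M}$ for the quantization forcing, and the dwell time serves only for well-posedness. You are more explicit than the paper about the invariance $(\bm{1}^\top\otimes\mathbf{I}_n)\mathbf{E}\equiv\bm{0}$ and about continuity across switches. There is one small slip. In Frobenius norm the drift term is $\sqrt{n}\gamma$, not $\gamma$, because Assumption~\ref{as:bounds} bounds the induced norm. The corrected bound still lies below the paper's $b(\varepsilon)$, so your PE step is unaffected.

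The genuine gap is in the final step. There you need the residual $\hat{\mathbf{y}}_i-\hat{\mathbf{C}}_i\bm{\theta}=\mathbf{e}_i-\mathbf{E}_i\bm{\theta}$ to be ultimately bounded by something linear in $\varepsilon$ with zero intercept; this is exactly what delivers $B(0)=0$. The claim is true, but your justification, ``once the $\gamma/k$ transients vanish,'' does not work.

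Suppose you bound $\mathbf{e}_i$ and $\mathbf{E}_i\bm{\theta}$ separately, as your remark that ``the $\mathbf{e}$ dynamics have the same form'' suggests. Each then inherits an ultimate bound of order $\gamma\|\bm{\theta}\|/(k\lambda_{\mathcal{G},m})$, driven by $(\mathbf{H}\otimes\mathbf{I}_n)\dot{\mathbf{y}}'$ and $(\mathbf{H}\otimes\mathbf{I}_n)\dot{\mathbf{C}}'$ respectively. That is the steady-state DAC tracking error for time-varying signals. It is not a transient, and it does not shrink with $\varepsilon$, so this route only gives $B(0)>0$.

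The missing idea is to study the combination directly, as the paper does in the quantized analog of Lemma~\ref{lem:regression_invariance}. Write $\mathbf{r}_i=\hat{\mathbf{y}}_i-\hat{\mathbf{C}}_i\bm{\theta}=\mathbf{X}_i\bm{\theta}-\mathbf{x}_i$. Because $\mathbf{y}_i'=\mathbf{C}_i'\bm{\theta}$ holds identically, the $\dot{\mathbf{C}}'$ forcing cancels exactly, and with your convention $\mathcal{Q}(s)=s-\bm{\delta}(s)$ you get
\[
\dot{\mathbf{r}}=-k(\mathbf{L}_{\sigma(t)}\otimes\mathbf{I}_n)\mathbf{r}+k(\mathbf{L}_{\sigma(t)}\otimes\mathbf{I}_n)(\bm{\delta}^y-\bm{\delta}\bm{\theta}),\qquad \mathbf{r}(0)=\bm{0},
\]
with $(\bm{1}^\top\otimes\mathbf{I}_n)\mathbf{r}\equiv\bm{0}$. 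The same common Lyapunov function $\tfrac12\|\mathbf{r}\|^2$ then gives, for all $t\ge0$,
\[
\|\mathbf{r}(t)\|\le\frac{\lambda_{\mathcal{G},M}}{\lambda_{\mathcal{G},m}}\sqrt{nN}\,\varepsilon\,(1+\sqrt{n}\|\bm{\theta}\|).
\]
This bound is linear in $\varepsilon$ and contains no $\gamma$ term. It is the input bound you must feed into the input-to-state stability step to obtain $B(0)=0$.
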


As shown in Appendix \ref{sec:convergence}, the proof of Corollary \ref{cor:quantization} is based on a graph-independent Lyapunov function. Thus, the proof of Corollary~\ref{cor:switched} follows from a common Lyapunov function argument with the dwell-time condition ensuring the well-posedness of the resulting dynamics \citep{liberzon2003switching}.

\section{Distributed DREM}\label{sec:DREM}
The DREM approach requires a weaker PE condition to guarantee the convergence of the estimated parameters. Therefore, it can be used as an alternative to the GE and may remain effective even if the GE fails.

The DREM \citep{aranovskiy2016drem} proposes applying linear, $\mathcal{L}_\infty$-stable operators $\mathcal{H}_i$ to the linear regression equation. In particular, for example, exponentially stable linear time-invariant filters of the form
\begin{equation}
    \mathcal{H}_j(\bullet)=\frac{\alpha_j}{p + \beta_j}(\bullet)
\end{equation}
where $p=\tfrac{d}{dt}$, and $\alpha_j \neq 0, \beta_j >0$ can be used. These filters introduce additional information, generally seeking to have a square matrix from which a scalar equation for each parameter is obtained after multiplication with its adjugate.

In our case, we propose introducing linear, $\mathcal{L}_\infty$-stable operators on $\mathbf{\hat{C}}_i(t)$ and $\mathbf{\hat{y}}_i(t)$, presented in equations \eqref{eq:Consensus_output}--\eqref{eq:consensus_output_y}, stacking the result in matrices $\mathbf{C}^f_i(t)$ and $\mathbf{y}_i^f(t)$, respectively, until $\det(\mathbf{C}_i^f(t)^\top \mathbf{C}_i^f(t)) \notin \mathcal{L}_2$. Thus, the GE block in \eqref{eq:alg_SGE} is replaced by a DREM estimator as follows: \begin{subequations}\label{alg:DREM_Filters}
    \begin{align}
     &\textbf{Local parameter estimation:} \nonumber\\
        & \mathbf{C}^f_i(t)=[\mathbf{\hat{C}}_i(t)^\top \ \ \mathcal{H}_1[\mathbf{\hat{C}}_i(t)]^\top  \ \ \dots \ \ \mathcal{H}_{r}[\mathbf{\hat{C}}_i(t)]^\top ]^\top \label{eq:alg_Drem_C}\\
        & \mathbf{y}_i^f (t) = [\mathbf{\hat{y}}_i(t)^\top \ \ \mathcal{H}_1[\mathbf{\hat{y}}_i(t)]^\top  \ \ \dots \ \ \mathcal{H}_{r}[\mathbf{\hat{y}}_i(t)]^\top ]^\top \label{eq:alg_drem_y}\\
        & \phi_i(t) = \text{det} (\mathbf{C}^f_i(t)^\top \mathbf{C}^f_i(t))\label{eq:alg_phi} \\
        & \mathbf{Y}_i(t)= \text{adj}(\mathbf{C}^f_i(t)^\top \mathbf{C}^f_i(t)) (\mathbf{C}^f_i (t)^\top \mathbf{y}_i^f(t)) \label{eq:alg_DREM_Yg}\\
        &\dot{\boldsymbol{\hat{\theta}}}_i(t) = \bm{\Gamma}_i \phi_i(t)\big(\mathbf{Y}_i(t) - \phi_i(t)\boldsymbol{\hat{\theta}}_i(t)\big). \label{eq:alg_DREM_th}
    \end{align}
\end{subequations}
where $\boldsymbol{\Gamma}_i \succ \boldsymbol{0}$ is a diagonal matrix and a design parameter and $\mathcal{H}_j: \mathcal{L}_\infty \to \mathcal{L}_\infty$ is a linear, $\mathcal{L}_\infty$-stable operator with $j \in \{1, \dots, r\}$.

\begin{cor}\label{cor:DREM_L2}
Let \eqref{eq:Consensus_output}, \eqref{eq:consensus_output_y} be consensus outputs and let $\mathbf{C}_i^f(t)$ be defined in \eqref{eq:alg_Drem_C}. If the number $r$ of operators $\mathcal{H}_j$ makes $\phi_i (t)\notin \mathcal{L}_2$, then $\hat{\boldsymbol{\theta}}_i(t)$ in \eqref{eq:alg_DREM_th} converges exponentially to $\boldsymbol{\theta}$.
\end{cor}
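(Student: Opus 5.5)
The plan is to reduce the distributed problem, agent by agent, to the standard scalar DREM error dynamics. The key fact is that the DAC block in \eqref{eq:Consensus_P}--\eqref{eq:consensus_output_y} preserves the regression structure \emph{exactly} at every time, not only in the limit. Once this is established, linearity of the operators $\mathcal{H}_j$ and the adjugate identity yield decoupled scalar equations, and the non-$\mathcal{L}_2$ hypothesis on $\phi_i$ gives convergence.

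\textbf{Step 1: exact regression at the DAC outputs.} Define $\mathbf{w}_i(t)=\mathbf{x}_i(t)-\mathbf{X}_i(t)\bm{\theta}$. By \eqref{eq:cons_i} and \eqref{eq:Consensus_output}--\eqref{eq:consensus_output_y},
\[
\hat{\mathbf{y}}_i-\hat{\mathbf{C}}_i\bm{\theta}=(\mathbf{y}_i'-\mathbf{C}_i'\bm{\theta})-\mathbf{w}_i=-\mathbf{w}_i .
\]
Since $\bm{\theta}$ is constant, subtracting \eqref{eq:Consensus_P} multiplied on the right by $\bm{\theta}$ from \eqref{eq:consensus_x} gives
\[
\dot{\mathbf{w}}_i=k\sum_{j\in\mathcal{N}_i}\big((\hat{\mathbf{y}}_i-\hat{\mathbf{C}}_i\bm{\theta})-(\hat{\mathbf{y}}_j-\hat{\mathbf{C}}_j\bm{\theta})\big)=-k\sum_{j\in\mathcal{N}_i}(\mathbf{w}_i-\mathbf{w}_j).
\]
In stacked form this is $\dot{\mathbf{w}}=-k(\mathbf{L}\otimes\mathbf{I}_n)\mathbf{w}$. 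The trivial initialization $\mathbf{X}_i(0)=\bm{0}$, $\mathbf{x}_i(0)=\bm{0}$ gives $\mathbf{w}(0)=\bm{0}$, hence $\mathbf{w}\equiv\bm{0}$. Therefore \eqref{Eq:regresor}, i.e.\ $\hat{\mathbf{y}}_i(t)=\hat{\mathbf{C}}_i(t)\bm{\theta}$, holds for all $t\ge0$, independently of the value of $k$. I will also note that, under Assumption~\ref{as:bounds}, standard DAC bounds keep $\hat{\mathbf{C}}_i$ and $\hat{\mathbf{y}}_i$ in $\mathcal{L}_\infty$, so the operators $\mathcal{H}_j$ are well defined on them.

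\textbf{Step 2: extension and mixing.} Because each $\mathcal{H}_j$ is linear and $\bm{\theta}$ is constant, $\mathcal{H}_j[\hat{\mathbf{y}}_i]=\mathcal{H}_j[\hat{\mathbf{C}}_i]\bm{\theta}+\bm{\epsilon}_{ij}(t)$, where $\bm{\epsilon}_{ij}$ collects filter initial-condition transients. For the LTI filters $\alpha_j/(p+\beta_j)$, $\bm{\epsilon}_{ij}$ decays exponentially, and it vanishes if the filters are initialized consistently (e.g.\ at zero). Taking the exact case first, $\mathbf{y}_i^f=\mathbf{C}_i^f\bm{\theta}$. Multiplying by $\text{adj}(\mathbf{C}_i^{f\top}\mathbf{C}_i^f)\mathbf{C}_i^{f\top}$ and using $\text{adj}(\mathbf{M})\mathbf{M}=\det(\mathbf{M})\mathbf{I}_n$ gives $\mathbf{Y}_i(t)=\phi_i(t)\bm{\theta}$. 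Then \eqref{eq:alg_DREM_th} yields
\[
\dot{\tilde{\bm{\theta}}}_i=-\bm{\Gamma}_i\phi_i^2(t)\tilde{\bm{\theta}}_i .
\]
Since $\bm{\Gamma}_i=\text{diag}(\gamma_{i,1},\dots,\gamma_{i,n})$ is diagonal, this decouples into scalar equations with explicit solutions
\[
\tilde{\theta}_{i,\ell}(t)=\exp\!\Big(-\gamma_{i,\ell}\int_0^t\phi_i^2(s)\,ds\Big)\tilde{\theta}_{i,\ell}(0).
\]

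\textbf{Step 3: convergence and the main obstacle.} The condition $\phi_i\notin\mathcal{L}_2$ means $\int_0^t\phi_i^2\to\infty$, so each component converges to zero, monotonically in absolute value. The delicate point is upgrading this to \emph{exponential} convergence. Non-square-integrability alone gives only asymptotic convergence, at a rate dictated by the growth of $\int_0^t\phi_i^2$. I would therefore argue that in the intended setting (bounded regressors from Assumption~\ref{as:bounds} and filters that excite $\mathbf{C}_i^f$ uniformly) the integral grows at least linearly, i.e.\ $\phi_i\in\text{PE}$, which gives an exponential bound $e^{-\gamma_{i,\ell}\alpha_\phi\lfloor t/T_\phi\rfloor}$. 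I would state explicitly that this interpretation is needed for the exponential claim.

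Finally, if the filter transients $\bm{\epsilon}_{ij}$ are present, the error equation gains an additive, exponentially decaying input $\bm{\Gamma}_i\phi_i\,\text{adj}(\cdot)\mathbf{C}_i^{f\top}\bm{\epsilon}_i$, which is bounded times an exponentially decaying signal. A standard comparison, or variation-of-constants, argument on the scalar equations then preserves the conclusion.
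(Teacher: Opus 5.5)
Your Steps 1--2 follow the paper's route and make it more explicit. The $\mathbf{w}_i$ computation is exactly the paper's regression-invariance lemma (Lemma~\ref{lem:regression_invariance}, with $\mathbf{r}_i=-\mathbf{w}_i$). The mixing step arrives at the same scalar dynamics $\dot{\tilde{\bm{\theta}}}_i=-\bm{\Gamma}_i\phi_i^2\tilde{\bm{\theta}}_i$ that the paper uses. The paper simply takes $\mathbf{y}_i^f=\mathbf{C}_i^f\bm{\theta}$ as exact, and your handling of the filter initial-condition transients correctly fills that omission.

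You diverge from the paper in Step 3, and the obstacle you flag there is genuine. It is a flaw in the paper's argument, not in yours: the paper's proof asserts that $\phi_i\notin\mathcal{L}_2$ suffices for \emph{exponential} convergence, and that is false.

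Take $n=1$, all agents with $\mathbf{C}_i(t)=(1+t)^{-1/8}$, and $r=0$. Then:
\begin{itemize}
\item $\mathbf{X}_i\equiv\bm{0}$ and $\hat{\mathbf{C}}_i=(1+t)^{-1/4}$;
\item $\phi_i=(1+t)^{-1/2}\notin\mathcal{L}_2$;
\item Assumption~\ref{as:bounds} holds with $\beta=1$, since $(\mathbf{H}\otimes\mathbf{I}_n)\dot{\mathbf{C}}'\equiv\bm{0}$.
\end{itemize}
Yet with $\bm{\Gamma}_i=g_i>0$ the error is $\tilde{\theta}_i(t)=(1+t)^{-g_i}\tilde{\theta}_i(0)$, which decays only polynomially.

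So under the stated hypothesis, what can be proved is asymptotic convergence, and your argument does prove that. For the transient input, note that it is bounded times an exponentially decaying signal, hence in $\mathcal{L}_1$. Dominated convergence then applies, since $\int_s^t\phi_i^2\to\infty$ for every fixed $s$.

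What you should not do is present $\phi_i\in\text{PE}$ as following ``in the intended setting'' from Assumption~\ref{as:bounds}. Boundedness only bounds $\phi_i$ from above and gives no lower bound on the growth of $\int_0^t\phi_i^2$. State $\phi_i\in\text{PE}$ explicitly as an additional hypothesis for the exponential claim. With that, your proof is complete and more careful than the paper's.
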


Algorithm \eqref{alg:DREM_Filters} is based on introducing the necessary number of operators $\mathcal{H}_j$ that allow us to expand the information available to each agent. However, the consensus output could fulfill condition \eqref{eq:condition_DREM} on its own, so it would not be necessary to introduce operators, i.e., $r=0$.
 
 Furthermore, when the matrix $\mathbf{\hat{C}}_i(t)$ resulting from the consensus in \eqref{eq:Consensus_output} is PE or $\phi_i=\det(\mathbf{\hat{C}}_i(t))$ satisfies condition \eqref{eq:condition_DREM}, a simpler version of the local DREM-based parameter estimator is given by
\begin{subequations}
    \begin{align}\label{alg:DREM_NotFilter}
        &\textbf{Local parameter estimation:} \nonumber\\
        & \phi_i(t) = \det\!\left(\mathbf{\hat{C}}_i(t)\right), \\
        & \mathbf{Y}_i(t)= \operatorname{adj}\!\big(\mathbf{\hat{C}}_i(t)\big)\,\mathbf{\hat{y}}_i(t), \\
        & \dot{\boldsymbol{\hat{\theta}}}_i(t) 
          = \bm{\Gamma}_i\,\phi_i(t)\left(\mathbf{Y}_i(t) - \phi_i(t)\boldsymbol{\hat{\theta}}_i(t)\right). 
          \label{eq:alg_DREM_cor}
    \end{align}
\end{subequations}
\begin{cor}\label{cor:DREM}
Let $\mathbf{\hat{C}}_i(t)$ and $\mathbf{\hat{y}}_i(t)$ be the consensus outputs defined in~\eqref{eq:Consensus_output}--\eqref{eq:consensus_output_y}. If $\mathbf{\hat{C}}_i(t)$ is PE or $\det(\mathbf{\hat{C}}_i(t)) \notin \mathcal{L}_2$, then $\hat{\boldsymbol{\theta}}_i$ in \eqref{eq:alg_DREM_cor} converges exponentially to $\boldsymbol{\theta}$. 
\end{cor}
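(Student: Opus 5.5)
The plan is to reduce the scheme \eqref{alg:DREM_NotFilter} to a family of decoupled scalar gradient estimators, one for each component of $\bm{\theta}$, and then invoke the standard scalar DREM argument. The first step is an exact regression identity for the consensus outputs. With the DAC initialization, $\hat{\mathbf{y}}_i(t)=\hat{\mathbf{C}}_i(t)\bm{\theta}$ holds for all $t\ge 0$, i.e., \eqref{Eq:regresor} is exact and not only asymptotic. To see this, set $\mathbf{e}_i=\hat{\mathbf{y}}_i-\hat{\mathbf{C}}_i\bm{\theta}=\mathbf{x}_i-\mathbf{X}_i\bm{\theta}$, using $\mathbf{y}_i'=\mathbf{C}_i'\bm{\theta}$ from \eqref{eq:cons_i}. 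Subtracting \eqref{eq:Consensus_P} multiplied by $\bm{\theta}$ from \eqref{eq:consensus_x} gives $\dot{\mathbf{e}}_i=-k\sum_{j\in\mathcal{N}_i}(\mathbf{e}_i-\mathbf{e}_j)$, since $\bm{\theta}$ is constant. The initial condition is $\mathbf{e}_i(0)=\mathbf{0}$, so $\mathbf{e}_i\equiv\mathbf{0}$. The same argument also covers the quantized or switched versions if one notes that $\mathcal{Q}$ is nonlinear; there I would instead use $\mathbf{e}_i(0)=0$ only for the unquantized case stated here.

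The second step is the mixing. Multiply $\hat{\mathbf{y}}_i=\hat{\mathbf{C}}_i\bm{\theta}$ on the left by $\operatorname{adj}(\hat{\mathbf{C}}_i)$ and use $\operatorname{adj}(\mathbf{M})\mathbf{M}=\det(\mathbf{M})\mathbf{I}_n$. This gives $\mathbf{Y}_i(t)=\phi_i(t)\bm{\theta}$, which reads componentwise as $Y_{i,\ell}=\phi_i\theta_\ell$. Let $\tilde{\bm{\theta}}_i=\hat{\bm{\theta}}_i-\bm{\theta}$. Because $\bm{\Gamma}_i=\operatorname{diag}(\gamma_{i,\ell})$, \eqref{eq:alg_DREM_cor} yields the decoupled scalar dynamics
\begin{equation*}
\dot{\tilde{\theta}}_{i,\ell}=-\gamma_{i,\ell}\,\phi_i(t)^2\,\tilde{\theta}_{i,\ell},
\end{equation*}
whose solution is $\tilde{\theta}_{i,\ell}(t)=\exp\!\big(-\gamma_{i,\ell}\int_0^t\phi_i^2(s)\,ds\big)\tilde{\theta}_{i,\ell}(0)$.

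The third step converts the hypotheses into convergence. If $\det(\hat{\mathbf{C}}_i)\notin\mathcal{L}_2$, then $\int_0^t\phi_i^2\to\infty$ and every error component tends to zero. If $\phi_i$ is PE (scalar PE with $\alpha_\phi,T_\phi$), then $\int_0^t\phi_i^2\ge \alpha_\phi\lfloor t/T_\phi\rfloor$, which gives the exponential bound $|\tilde{\theta}_{i,\ell}(t)|\le e^{\gamma_{i,\ell}\alpha_\phi}e^{-\gamma_{i,\ell}\alpha_\phi t/T_\phi}|\tilde{\theta}_{i,\ell}(0)|$.

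For the branch where $\hat{\mathbf{C}}_i$ is PE, I would show that $\phi_i$ inherits scalar PE. This rests on two facts: $\hat{\mathbf{C}}_i$ is uniformly bounded (by Assumption~\ref{as:bounds} and the DAC error bound used in Theorem~\ref{the:gain_PE}), and $\hat{\mathbf{C}}_i$ is symmetric. A bounded symmetric PE matrix that is close to $\overline{\mathbf{C}}\succeq 0$ has its smallest eigenvalue bounded below on a positive fraction of each window, so $|\det|$ is bounded below there.

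The main obstacle is the phrase ``converges exponentially'' in the non-$\mathcal{L}_2$ case. Plain $\phi_i\notin\mathcal{L}_2$ only guarantees asymptotic convergence; exponential convergence requires $\phi_i\in$PE. I would therefore argue the exponential claim under the PE branch as above, and state for the $\mathcal{L}_2$ branch the convergence rate $\exp(-\gamma_{i,\ell}\int_0^t\phi_i^2)$, which is exponential whenever $\int_0^t\phi_i^2$ grows at least linearly.
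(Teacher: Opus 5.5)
\textbf{Where you agree with the paper.} Your reduction is the paper's: the exact identity $\hat{\mathbf y}_i=\hat{\mathbf C}_i\bm\theta$ (Lemma~\ref{lem:regression_invariance}), adjugate mixing to $\mathbf Y_i=\phi_i\bm\theta$, and the decoupled scalar dynamics $\dot{\tilde\theta}_{i,\ell}=-\gamma_{i,\ell}\phi_i^2\tilde\theta_{i,\ell}$. In that lemma the correct residual is $\mathbf e_i=\mathbf X_i\bm\theta-\mathbf x_i$; your sign is flipped, but this is harmless. Your caveat on the non-$\mathcal L_2$ branch is also correct. For example, $\phi_i=(1+t)^{-1/2}$ gives $|\tilde\theta_{i,\ell}(t)|=(1+t)^{-\gamma_{i,\ell}}|\tilde\theta_{i,\ell}(0)|$, which is not exponential. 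The paper's proof of that branch likewise only concludes convergence, not an exponential rate.

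\textbf{The gap: the PE branch.} Your claim that a bounded, symmetric PE matrix close to $\overline{\mathbf C}\succeq\bm 0$ has $\lambda_{\min}$ bounded below on a positive fraction of each window is false. PE only says that, for each fixed unit $\mathbf v$, $\|\hat{\mathbf C}_i(\tau)\mathbf v\|$ is large on some part of the window. That part may depend on $\mathbf v$, whereas bounding $|\det\hat{\mathbf C}_i(\tau)|$ below needs all directions excited at the same instant.

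Concretely, take $N=2$, $n=2$ and $\mathbf C_1(t)=\mathbf C_2(t)=\mathbf u(t)^\top$ with $\mathbf u(t)=[\cos t\ \ \sin t]^\top$. Then $\mathbf C'=\bm 1\otimes\mathbf u\mathbf u^\top$, so $(\mathbf L\otimes\mathbf I_2)\mathbf C'=\bm 0$, hence $\mathbf X_i\equiv\bm 0$ and $\hat{\mathbf C}_i(t)=\overline{\mathbf C}(t)=\mathbf u(t)\mathbf u(t)^\top$ for all $t$. This $\hat{\mathbf C}_i$ is symmetric, positive semidefinite and bounded, and it satisfies Assumption~\ref{as:bounds}. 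It is PE, with $\int_t^{t+2\pi}\hat{\mathbf C}_i(\tau)^2\,\mathrm d\tau=\pi\mathbf I_2$, and the centralized $\mathbf C$ is PE too, so the GE of Theorem~\ref{the:gain_PE} works. Yet $\phi_i=\det\hat{\mathbf C}_i\equiv 0$, so \eqref{eq:alg_DREM_cor} gives $\dot{\hat{\bm\theta}}_i\equiv\bm 0$ and there is no convergence at all.

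So this branch of the statement cannot be proved as written. The paper's proof has the same hole: it invokes the claim that PE of a matrix forces its determinant out of $\mathcal L_2$, and this example refutes that claim for $\det\hat{\mathbf C}_i$.

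\textbf{How to repair it.} A correct version must either assume directly that $\det\hat{\mathbf C}_i$ is scalar PE, or pass through a regressor extension in the spirit of Corollary~\ref{cor:DREM_L2}. One option is the Kreisselmeier-type extension $\bm\Omega_i(t)=\int_0^t e^{-a(t-s)}\hat{\mathbf C}_i(s)^2\,\mathrm ds$ and $\bm\omega_i(t)=\int_0^t e^{-a(t-s)}\hat{\mathbf C}_i(s)\hat{\mathbf y}_i(s)\,\mathrm ds$. It satisfies $\bm\omega_i=\bm\Omega_i\bm\theta$, and $\bm\Omega_i(t)\succeq e^{-aT}\alpha'\mathbf I_n$ for $t\ge T$ whenever $\hat{\mathbf C}_i\in\mathrm{PE}(\alpha',T)$. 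Hence $\det\bm\Omega_i\ge(e^{-aT}\alpha')^n$ for $t\ge T$, and your scalar argument then does yield exponential convergence.
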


\section{Discussion}\label{sec:discussion}
Based on the obtained results,
we emphasize that the proposed architecture (see bottom Figure \ref{fig:network}) for distributed parameter estimation is both flexible and versatile, as it enables the decoupling of the consensus analysis from the parameter estimation process. The architecture is composed of two blocks: a consensus module and a parameter estimation module. In the consensus stage, one works with surrogate versions of the regressor matrices instead of the original ones. A key advantage of this approach is that the surrogate signals $\hat{\mathbf{C}}_i(t)$ exhibit local PE  even in cases where the original regressor matrices do not have this property. This is guaranteed by selecting the gain according to a computable bound (see Theorem \ref{the:gain_PE}). This allows us to formulate a new regressor equation that relies on the surrogates in place of the original regressor matrices in \eqref{Eq:regresor} (see Lemma \ref{lem:regression_invariance}).  Because the surrogates possess the property of persistent excitation, the parameter estimation process is ensured to converge exponentially to the true parameter values. That is to say, the consensus module does not act directly on the parameter estimates, but rather on the regression data. By exploiting the separation between analyses, the architecture becomes modular, allowing different algorithms to be applied at each stage. This flexibility enables it to capture various phenomena found in real-world problems. For instance, during the consensus phase, it is possible to incorporate effects like data quantization/rounding and switching communication graphs, while still demonstrating that the surrogates retain the persistent excitation property. Furthermore, in the estimation phase, we demonstrated that both gradient-descent and DREM-based estimation methods can be employed, and in each case the true parameter values are successfully recovered.

\section{Numerical Examples}\label{sec:example}

We consider $N = 10$ agents connected by a collection of undirected graphs $\G_\sigma$ given in Figures \ref{fig:graph1}-\ref{fig:graph4}. The switched network evolves according to a switching signal $\sigma(t)\in\{1,2,3,4\}$. For each agent $i$, the measurements are given by $y_i(t) = \mathbf{C}_i(t)\boldsymbol{\theta}$ with $n=3$, where $\mathbf{C}_i(t)\in \mathbb{R}^{1\times n}$ consists of terms of the form $A + B \sin(\omega t) + D \cos(\omega t)$. The parameters $\omega, A, B,$ and $D$ are sampled from uniform distributions, with $A, B, D$ sampled over $[0,20]$ and $\omega$ sampled over $[0,3]$. Based on these values, simulations yield $\gamma=8.3966$, $\beta=20.769$, $\alpha=51.326$, $T=0.16$, and $\min_{\sigma\in\{1,2,3,4\}}\lambda_\sigma=0.367$. Using these parameters, the consensus gain must satisfy $k> 2.778$. In the simulations, we select $k=2.806$.
\begin{figure}
    \begin{subfigure}{0.23\textwidth}
    \centering
        \includegraphics[width=\linewidth]{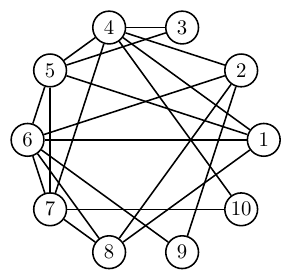}
        \caption{$\sigma=1$}
        \label{fig:graph1}
    \end{subfigure}
    \hfill
    \begin{subfigure}{0.23\textwidth}
    \centering
        \includegraphics[width=\linewidth]{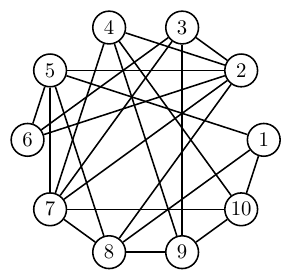}
        \caption{$\sigma=2$}
        \label{fig:graph2}
    \end{subfigure}
    \medskip
    \begin{subfigure}{0.23\textwidth}
    \centering
        \includegraphics[width=\linewidth]{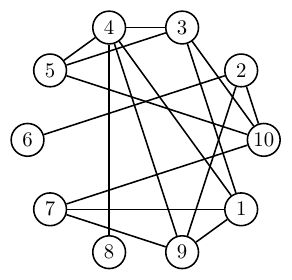}
        \caption{$\sigma=3$}
        \label{fig:graph3}
    \end{subfigure}
    \hfill
    \begin{subfigure}{0.23\textwidth}
    \centering
        \includegraphics[width=\linewidth]{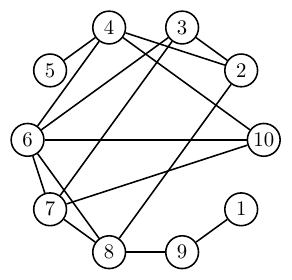}
        \caption{$\sigma=4$}
        \label{fig:graph4}
    \end{subfigure}
    \caption{Collection of communication topologies $\G_\sigma$ used for the switched dynamic network in the numerical example.}
    \label{fig:graphs}
\end{figure}

The examples in this section illustrate that both estimation approaches, distributed GE and DREM, achieve good performance. In all figures, blue corresponds to $\theta_1$, red to $\theta_2$, and green to $\theta_3$. In the parameter estimation graphs, the gradient-based approach is depicted in green, while DREM is depicted in blue. Let $\tilde{\theta}_{i,\mu}$ represent the estimation error of agent $i$ with respect to $\theta_\mu, \mu=1,\dots,n$. In the estimation error plots, we group all errors related to $\theta_\mu$ into the vector $\mathbf{e}_\mu = [\,\tilde{\theta}_{1,\mu}, \dots, \ \tilde{\theta}_{N,\mu}]$.

Figure \ref{fig:example} shows the simulation results with a quantization step size of zero.  The upper graph shows the estimate for each agent’s parameter estimate, while the dashed lines indicate the actual parameter values to be identified. The GE is shown in green and the DREM estimator in blue. In both cases, the estimates converge exponentially to the true values. The upper-middle graph shows the consensus output, where the dotted line represents the average value $\overline{\mathbf{y}}(t)$, in \eqref{eq:DAC_limit}, and the solid lines represent each consensus output $\hat{\mathbf{y}}_i(t)$ in \eqref{eq:consensus_output_y}. The lower-middle plot presents the estimation error $\|\mathbf{e}_i\|$ for $i \in \{1, 2, 3\}$. The dashed line corresponds to the GE error, while the solid line represents the DREM estimator error. In both cases, the errors converge exponentially to zero. The graphs switch according to the signal $\sigma(t)$, shown in the lower panel.

\begin{figure}
    \centering
    \includegraphics[width = \linewidth]{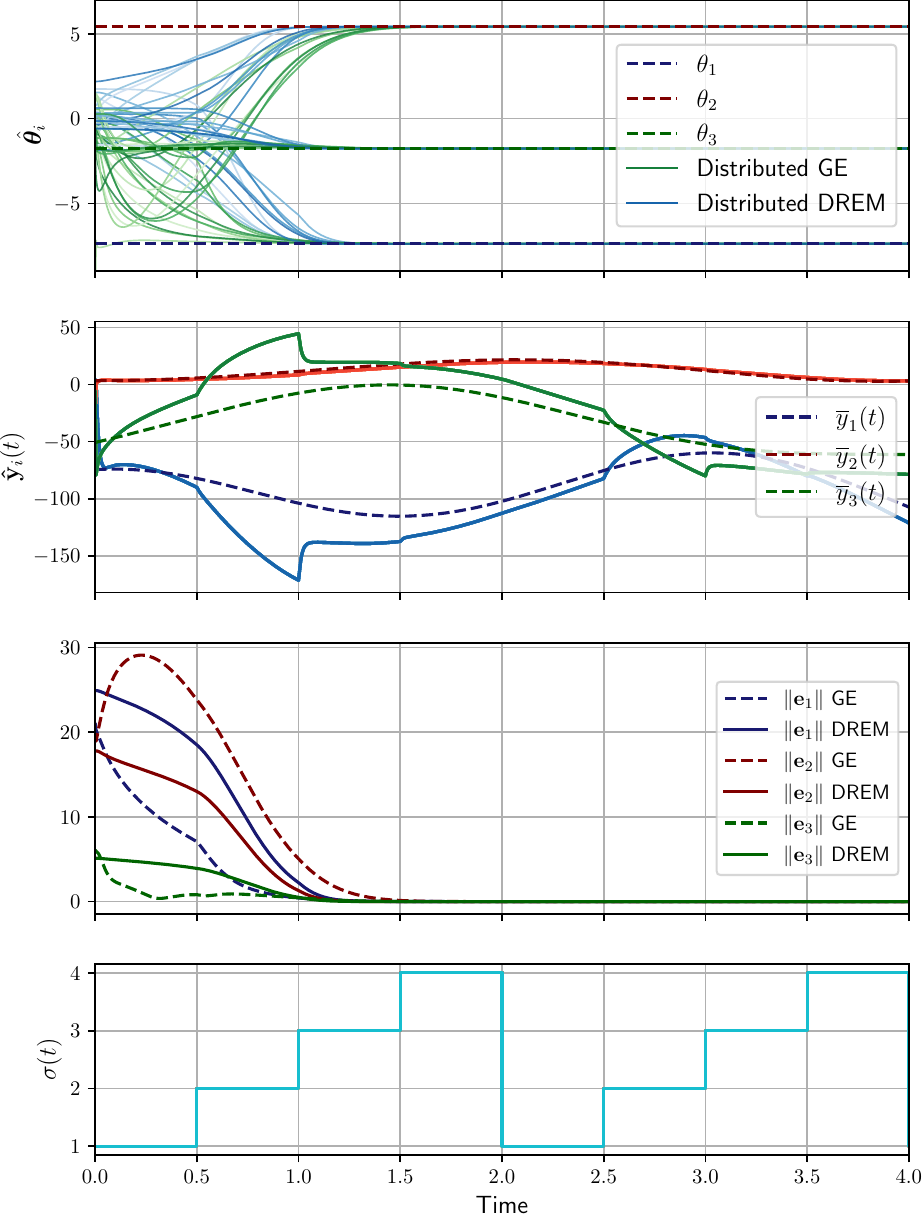}
    \caption{Simulation results. Colors are related to the parameters $\theta_1$ (blue), $\theta_2$ (red), and $\theta_3$ (green). (Top) Parameters estimation $\hat{\theta}_{i,\mu},$ for each agent for GE (green) and DREM (blue); dashed lines show $\bm{\theta}$. (Top-middle) Dashed line represents the average value ($\overline{\mathbf{y}}(t)$), while solid lines represent the individual consensus outputs $\hat{\mathbf{y}}_i(t)$. (Low-middle) Magnitude of the estimation error $\mathbf{e}_i$, dashed for GE and solid for DREM. (Bottom) The switching signal $\sigma(t)$.}
    \label{fig:example}
\end{figure}

In the next experiment, we use a quantization step of $\varepsilon = 0.036$, in accordance with the assumptions of Corollary \ref{cor:switched}, and keep the consensus gain value unchanged. The result is shown in Figure \ref{fig:quantization}. 
The upper graph displays the estimated parameters, revealing a small offset. The lower graph shows that, due to the quantization error, the estimation error does not exactly converge to zero, in agreement with Corollary \ref{cor:switched}.

To further illustrate the capabilities of the proposed approach, we examine the case in which noise is added to sensor measurements. Specifically, we take $\mathbf{y}_i(t) = \mathbf{C}_i(t)\bm{\theta} + \bm{\eta}_i (t)$, where we sample $\bm{\eta}_i(t)$ as zero-mean Gaussian noise with a standard deviation equal to $0.2$ for each simulation step. As shown in Figure \ref{fig:Noise}, the parameter estimates are affected by noise. Nevertheless, the estimation error still converges exponentially to a neighborhood of the origin. Because the noise scales with the consensus gain, increasing this gain results in a poorer estimate. This outcome suggests that the parameter can be recovered with reasonable accuracy in presence of noise.

Finally, we consider the scenario in which information is lost during communication between agents. The simulation results are presented in Figure \ref{fig:loss_paq}. In the upper graph, the estimation error decreases at an exponential rate. The lower graph illustrates the information flow by each agent, arranged from bottom to top. A high level represent periods when the agent is sending information, whereas low level indicate that the agent is not transmitting.
\begin{figure}
    \centering
    \includegraphics[width=1\linewidth]{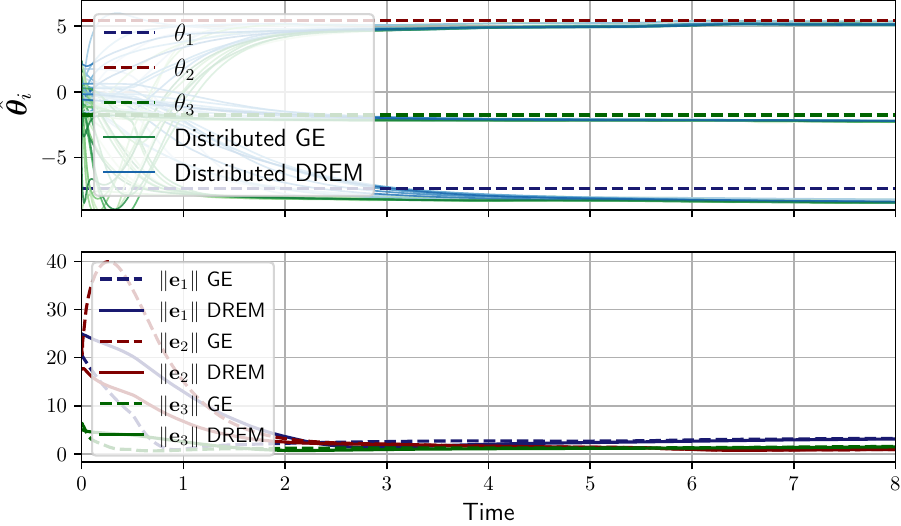}
    \caption{Simulation results using quantization. Colors are related to the parameters $\theta_1$ (blue), $\theta_2$ (red) $\theta_3$ (green). (Top) Parameter estimation for each agent for GE (green) and DREM (blue); dashed lines show $\bm{\theta}$. (Bottom) Magnitude of estimation error $\mathbf{e}_i$, dashed lines for GE and solid for DREM.}
    \label{fig:quantization}
\end{figure}

\begin{figure}
    \centering
    \includegraphics[width=\linewidth]{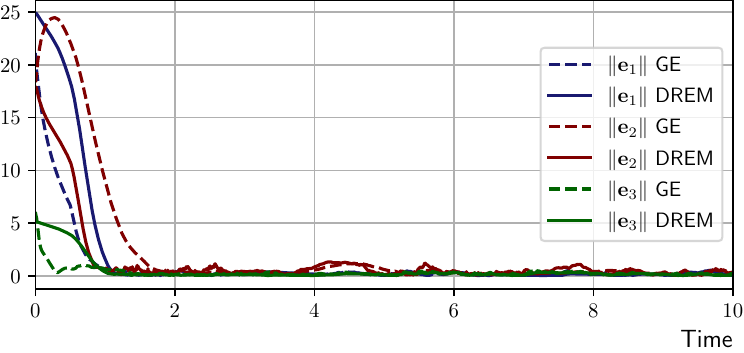}
    \caption{Simulation results with additive Gaussian sensor noise modeled as $\mathbf{y}_i(t)=\mathbf{C}_i(t)\bm{\theta}+\bm{\eta}_i(t)$, where $\bm{\eta}_i(t)$ is Gaussian noise with mean $0$ and standard deviation equal to $0.2$. Estimation error magnitudes $\mathbf{e}_1$ (blue), $\mathbf{e}_2$ (red), and $\mathbf{e}_3$ (green) are shown; dotted lines indicate GE and solid lines DREM.}
    \label{fig:Noise}
\end{figure}

\begin{figure}
    \centering
    \includegraphics[width=\linewidth]{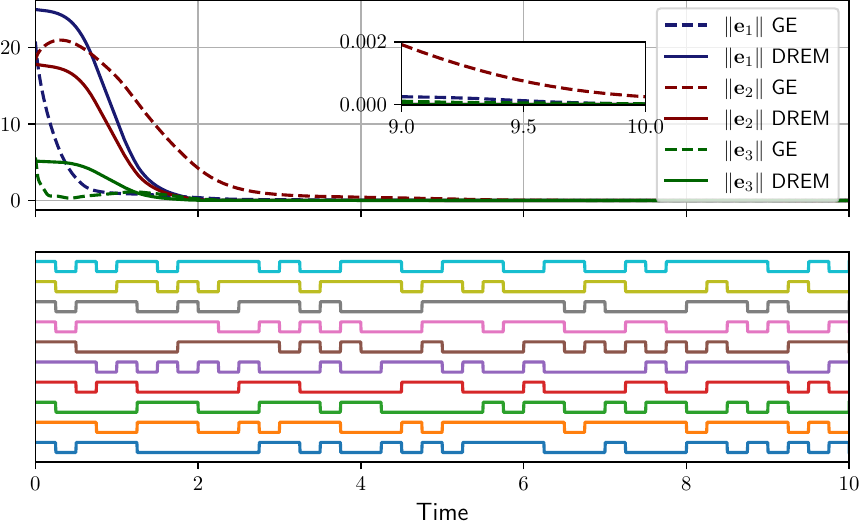}
    \caption{Simulation results with communication loss. (Top) Magnitude of estimation errors $\mathbf{e}_1(t)$ (blue), $\mathbf{e}_2$ (red), and $\mathbf{e}_3$ (green), where dotted lines correspond to GE and solid lines to DREM. (Bottom) Information transmitted by each agent. High level indicates that information is being sent, while the low level indicates no transmission.}
    \label{fig:loss_paq}
\end{figure}

\section{Conclusion}
This work presented a hierarchical distributed estimation framework that decouples the DAC algorithm from the parameter estimation. As a result, by calculating an appropriate consensus a regression matrix can be obtained that satisfies the PE condition, even when the communication topology switches over time. Under this condition, exponential convergence is ensured for two distinct parameter estimation methods: GE and DREM, enabling flexibility in the choice of estimator while preserving the advantages of each. The framework further allows for quantized communication between agents, yielding an estimation error that remains ultimately bounded, with its magnitude dictated by the chosen quantization step. All results are validated through a Lyapunov-function-based analysis, which guarantees global and exponential stability.
The results are confirmed by the simulations. For both GE and DREM, the estimation error exhibits exponential convergence. This behavior holds across all analyzed scenarios and aligns with the theoretical predictions. Furthermore, even in the presence of communication losses and sensor measurement noise, the estimators exhibit satisfactory performance, illustrating the practical robustness of the proposed framework. As future work, we aim to investigate adaptive gain tuning laws to overcome the limitation of knowing the lower bound for admissible consensus gain in advance.
Moreover, the impact of communication losses and measurement noise is yet to be analyzed formally. In addition, the proposed framework may be extended to nonlinear measurement models and event-triggered protocols.

\appendix\section{Proofs}
\label{sec:proofs}
\subsection{Proof of Proposition \ref{prop:prime_equivalence}}\label{appen:prop2}

Set $\mathcal{W}=\{\bm{\theta}\in \mathbb{R}^n : \mathbf{y}(t)=\mathbf{C}(t)\bm{\theta}\}$ and $\mathcal{W}'=\{\bm{\theta}\in \mathbb{R}^n : \mathbf{y}'(t)=\mathbf{C}'(t)\bm{\theta}\}$. Take any $\bm{\theta}\in \mathcal{W}'$.  Then $\mathbf{y}'(t) = \mathbf{C}'(t)\bm{\theta} = \mathbf{C}(t)^\top\mathbf{C}(t)\bm{\theta}$. Since $\mathbf{y}'(t)=\mathbf{C}(t)^\top\mathbf{y}(t)$, we obtain $\mathbf{C}(t)^\top(\mathbf{y}(t) - \mathbf{C}(t)\bm{\theta}) = \bm{0}$. Hence $\mathbf{w}:=\mathbf{y}(t)-\mathbf{C}(t)\bm{\theta} \in \text{Ker}(\mathbf{C}(t)^\top)$. 

Therefore, $\mathbf{w}\in\text{Im}(\mathbf{C}(t))^{\perp}$ by the identity $\text{Ker}(\mathbf{C}(t)^\top) = \text{Im}(\mathbf{C}(t))^\perp$, where $\text{Im}(\bullet)^\perp$ denotes the orthogonal complement of $\text{Im}(\bullet)$ \citep[Page 16]{horn}. On the other hand, since $\mathbf{y}(t)\in\text{Im}(\mathbf{C}(t))$, we conclude that $\mathbf{w}\in \text{Im}(\mathbf{C}(t))$. The only vector belonging simultaneously to $\text{Im}(\mathbf{C}(t))$ and its orthogonal complement is $\mathbf{w}=\bm{0}$.

Henceforth $\mathbf{y}(t)=\mathbf{C}(t)\bm{\theta}$, implying $\bm{\theta}\in\mathcal{W}$. This shows $\mathcal{W}'\subseteq\mathcal{W}$. Conversely, if $\bm{\theta}\in \mathcal{W}$, then pre-multiplying $\mathbf{y}(t)=\mathbf{C}(t)\bm{\theta}$ by $\mathbf{C}(t)^\top$ gives $\mathbf{y}'(t) = \mathbf{C}(t)^\top\mathbf{y}(t) = \mathbf{C}(t)^\top\mathbf{C}(t)\bm{\theta} = \mathbf{C}'(t)\bm{\theta}$, so $\bm{\theta}\in\mathcal{W}'$, i.e. $\mathcal{W}\subseteq\mathcal{W}'$. Therefore, $\mathcal{W}=\mathcal{W}'$.

\subsection{Auxiliary lemmas for Theorem \ref{the:gain_PE}}
\label{sec:convergence}

To facilitate the proof of Theorem \ref{the:gain_PE}, the following technical lemmas are introduced. 

\begin{lem}\label{lem:M_PE}
    Let $\mathbf{\overline{C}}(t)=\mathbf{C}^\top(t) \mathbf{C}(t)/N$. If $\mathbf{C}(t)$ in \eqref{Eq:MatCentralizado} is $\text{PE}(\alpha,T)$, for some $\alpha, T >0$, then $\mathbf{\overline{C}}(t)^2\in\text{PE}(\frac{\alpha^2}{T N^2}, T)$.
\end{lem}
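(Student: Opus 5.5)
Let $\mathbf{M}(t)=\mathbf{C}(t)^\top\mathbf{C}(t)\succeq\bm{0}$, so that $\overline{\mathbf{C}}(t)=\mathbf{M}(t)/N$ is symmetric positive semidefinite. We need to show that for every $t\ge 0$
\[
\int_t^{t+T}\overline{\mathbf{C}}(\tau)^\top\overline{\mathbf{C}}(\tau)\,\mathrm{d}\tau=\frac{1}{N^2}\int_t^{t+T}\mathbf{M}(\tau)^2\,\mathrm{d}\tau\succeq \frac{\alpha^2}{TN^2}\mathbf{I}_n .
\]
Since $\overline{\mathbf{C}}$ is symmetric, $\overline{\mathbf{C}}^\top\overline{\mathbf{C}}=\overline{\mathbf{C}}^2$. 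This identifies the PE requirement for the matrix function $\overline{\mathbf{C}}$ with the quadratic form of $\overline{\mathbf{C}}^2$, which is how the statement ``$\overline{\mathbf{C}}(t)^2\in\text{PE}$'' is to be understood. With this reading the proof reduces to a scalar inequality along each fixed direction.

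Fix $t\ge0$ and a unit vector $\mathbf{v}\in\mathbb{R}^n$. Then
\[
\mathbf{v}^\top\mathbf{M}(\tau)^2\mathbf{v}=\|\mathbf{M}(\tau)\mathbf{v}\|^2\ \ge\ (\mathbf{v}^\top\mathbf{M}(\tau)\mathbf{v})^2 ,
\]
where the inequality is Cauchy--Schwarz together with $\|\mathbf{v}\|=1$. Integrating over $[t,t+T]$ and applying Cauchy--Schwarz (Jensen) in time gives
\[
\int_t^{t+T}(\mathbf{v}^\top\mathbf{M}\mathbf{v})^2\,\mathrm{d}\tau\ \ge\ \frac{1}{T}\left(\int_t^{t+T}\mathbf{v}^\top\mathbf{M}(\tau)\mathbf{v}\,\mathrm{d}\tau\right)^2 .
\]
By Definition~\ref{def:PE} applied to $\mathbf{C}\in\text{PE}(\alpha,T)$, the inner integral is at least $\alpha$. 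Because that integral is nonnegative, squaring preserves the inequality, so the right-hand side is at least $\alpha^2/T$.

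Dividing by $N^2$ gives $\mathbf{v}^\top\left(\int_t^{t+T}\overline{\mathbf{C}}^2\right)\mathbf{v}\ge \alpha^2/(TN^2)$ for every unit $\mathbf{v}$ and every $t\ge0$. This is exactly the claimed $\text{PE}(\frac{\alpha^2}{TN^2},T)$ bound, with the same window length $T$.

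The only delicate step is the passage from the quadratic form of $\mathbf{M}$ to that of $\mathbf{M}^2$. A matrix inequality such as $\mathbf{M}^2\succeq$ (something involving $\mathbf{M}$) fails in general without a bound like $\|\mathbf{M}\|$, so the argument must be carried out direction by direction. The plan uses the pointwise vector Cauchy--Schwarz bound followed by the integral Cauchy--Schwarz bound to avoid this, and it needs neither Assumption~\ref{as:bounds} nor a bound on $\beta$. Measurability and local integrability of $\mathbf{M}^2$ are implicitly assumed, as they already are in Definition~\ref{def:PE}.
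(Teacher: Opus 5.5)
Your proposal is correct and follows the paper's proof essentially step by step. Both arguments work one unit direction $\mathbf{v}$ at a time, combining the time-domain Cauchy--Schwarz bound $\bigl(\int_t^{t+T}\mathbf{v}^\top\overline{\mathbf{C}}\mathbf{v}\,\mathrm{d}\tau\bigr)^2\le T\int_t^{t+T}(\mathbf{v}^\top\overline{\mathbf{C}}\mathbf{v})^2\,\mathrm{d}\tau$ with the pointwise bound $(\mathbf{v}^\top\overline{\mathbf{C}}\mathbf{v})^2\le\mathbf{v}^\top\overline{\mathbf{C}}^2\mathbf{v}$. Your reading of ``$\overline{\mathbf{C}}^2\in\mathrm{PE}$'' as $\int_t^{t+T}\overline{\mathbf{C}}(\tau)^2\,\mathrm{d}\tau\succeq\frac{\alpha^2}{TN^2}\mathbf{I}_n$ is exactly what the paper proves, and it is the form used later in the proof of Theorem~\ref{the:gain_PE}.
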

 \begin{pf}
    By definition $\mathbf{\overline{C}}(t)=\frac{1}{N}\mathbf{C}(t)^\top\mathbf{C}(t)$, then $\int_t^{t+T} \mathbf{\overline{C}}(\tau)\text{d}\tau = \frac{1}{N} \int_t^{t+T} \mathbf{C}(\tau)^\top\mathbf{C}(\tau)\text{d}\tau \succeq \frac{\alpha}{N} \mathbf{I}_n.$
    Henceforth, for any $\mathbf{v}\in\mathbb{R}^n$ with $\|\mathbf{v}\|=1$ it follows that:
    \begin{equation*}
        \begin{aligned}
           &\left(\frac{\alpha}{N}\right)^2\leq \left[\mathbf{v}^\top \left( \int_t^{t+T} \mathbf{\overline{C}} (\tau)\text{d}\tau\right)\mathbf{v}\right]^2 
           \\&=
            \left[\int_t^{t+T} \mathbf{v}^\top\mathbf{\overline{C}}(\tau)\mathbf{v}\text{d}\tau\right]^2 
            \leq T\int_{t}^{t+T}(\mathbf{v}^\top \mathbf{\overline{C}}(\tau)\mathbf{v})^2\text{d}\tau\\
        \end{aligned}
    \end{equation*}
    where we used the Cauchy-Schwarz inequality with $g(\tau)=1$ and $h(\tau)=\mathbf{v}^\top\overline{\mathbf{C}}(\tau)\mathbf{v}$. Moreover, using again the Cauchy-Schwarz inequality:
    \begin{equation*}
    \begin{aligned}
            &(\mathbf{v}^\top \mathbf{\overline{C}}(\tau)\mathbf{v})^2 
             \leq \|\mathbf{v}\|^2\|\mathbf{\overline{C}}(\tau)\mathbf{v}\|^2 \\
            &=(\mathbf{\overline{C}}(\tau)\mathbf{v})^\top(\mathbf{\overline{C}}(\tau)\mathbf{v})=\mathbf{v}^\top \mathbf{\overline{C}}^\top(\tau) \mathbf{\overline{C}}(\tau) \mathbf{v}
            =\mathbf{v}^\top \mathbf{\overline{C}} (\tau)^2 \mathbf{v}\\ 
    \end{aligned}
    \end{equation*}
    Therefore $\frac{1}{T}\left(\frac{\alpha}{N}\right)^2\leq \int_t^{t+T}\mathbf{v}^\top \mathbf{\overline{C}} (\tau)^2 \mathbf{v}\text{d}\tau = \mathbf{v}^\top  \int_t^{t+T} \mathbf{\overline{C}} (\tau)^2 \text{d}\tau \mathbf{v}$.
    Recall that the previous inequality holds for all unit vectors $\mathbf{v}\in\mathbb{R}^n$. Then, $
    \frac{1}{T}\left(\frac{\alpha}{N}\right)^2\mathbf{I}_n \preceq \int_t^{t+T} \mathbf{\overline{C}} (\tau)^2 \text{d}\tau$ which completes the proof.
\end{pf}
 \begin{lem} \label{lem:Frobenius_error}
 Let the consensus error at agent $i$ be denoted as $\tilde{\mathbf{C}}_i(t) = \hat{\mathbf{C}}_i(t) - \overline{\mathbf{C}}(t)$. If the Assumption \ref{as:bounds}  holds then there exists a finite time $T_{\tilde{\mathbf{C}}}(\tilde{\mathbf{C}}_i(0))>0$ such that $$\| \tilde{\mathbf{C}}_i(t)\| \leq \frac{ n \gamma} {k \lambda_{\mathcal{G}}} \quad \forall t\geq T_{\tilde{C}}(\tilde{\mathbf{C}}_i(0)).$$
 \end{lem}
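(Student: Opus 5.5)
The plan is to write the DAC layer in stacked form and study the disagreement from the average. Stack $\hat{\mathbf{C}}(t)=\col(\hat{\mathbf{C}}_i(t))_{i=1}^N$ and $\mathbf{X}(t)=\col(\mathbf{X}_i(t))_{i=1}^N$, both in $\mathbb{R}^{Nn\times n}$. Then \eqref{eq:Consensus_P} and \eqref{eq:Consensus_output} read
\[
\dot{\hat{\mathbf{C}}}=\dot{\mathbf{C}}'-k(\mathbf{L}\otimes\mathbf{I}_n)\hat{\mathbf{C}}.
\]
Because $\bm{1}^\top\mathbf{L}=\bm{0}$ and $\sum_i\mathbf{X}_i(0)=\bm{0}$, the sum $\sum_i\mathbf{X}_i(t)=\bm{0}$ for all $t$. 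Hence $(\bm{1}^\top\otimes\mathbf{I}_n)\hat{\mathbf{C}}/N=\overline{\mathbf{C}}$, and the stacked error is $\tilde{\mathbf{C}}=(\mathbf{H}\otimes\mathbf{I}_n)\hat{\mathbf{C}}$. Using $\mathbf{H}\mathbf{L}=\mathbf{L}\mathbf{H}=\mathbf{L}$, it satisfies
\[
\dot{\tilde{\mathbf{C}}}=-k(\mathbf{L}\otimes\mathbf{I}_n)\tilde{\mathbf{C}}+(\mathbf{H}\otimes\mathbf{I}_n)\dot{\mathbf{C}}',
\]
and every column of $\tilde{\mathbf{C}}$ lies in $\mathrm{Im}(\mathbf{H}\otimes\mathbf{I}_n)$, where $\mathbf{L}\otimes\mathbf{I}_n\succeq\lambda_{\mathcal G}\mathbf{I}$.

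Next, work column by column. Let $\tilde{\mathbf{c}}_\ell=\tilde{\mathbf{C}}\mathbf{e}_\ell$ for $\ell=1,\dots,n$. Each satisfies $\dot{\tilde{\mathbf{c}}}_\ell=-k(\mathbf{L}\otimes\mathbf{I}_n)\tilde{\mathbf{c}}_\ell+\mathbf{d}_\ell$, with $\|\mathbf{d}_\ell\|\le\|(\mathbf{H}\otimes\mathbf{I}_n)\dot{\mathbf{C}}'\|\le\gamma$ by Assumption~\ref{as:bounds}. Take $V_\ell=\|\tilde{\mathbf{c}}_\ell\|$, or equivalently $\tfrac12\|\tilde{\mathbf{c}}_\ell\|^2$. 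Since $\tilde{\mathbf{c}}_\ell$ is orthogonal to the consensus subspace, $\tilde{\mathbf{c}}_\ell^\top(\mathbf{L}\otimes\mathbf{I}_n)\tilde{\mathbf{c}}_\ell\ge\lambda_{\mathcal G}\|\tilde{\mathbf{c}}_\ell\|^2$. This gives
\[
\tfrac{d}{dt}\|\tilde{\mathbf{c}}_\ell\|\le -k\lambda_{\mathcal G}\|\tilde{\mathbf{c}}_\ell\|+\gamma
\]
wherever $\tilde{\mathbf{c}}_\ell\neq\bm{0}$. The comparison lemma then yields
\[
\|\tilde{\mathbf{c}}_\ell(t)\|\le e^{-k\lambda_{\mathcal G}t}\|\tilde{\mathbf{c}}_\ell(0)\|+\frac{\gamma}{k\lambda_{\mathcal G}}\bigl(1-e^{-k\lambda_{\mathcal G}t}\bigr).
\]

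To assemble the bound, note that $\|\tilde{\mathbf{C}}_i\|\le\|\tilde{\mathbf{C}}_i\|_F\le\sum_\ell\|\tilde{\mathbf{c}}_\ell\|$, which is where the factor $n$ enters. The displayed estimate gives $\|\tilde{\mathbf{C}}_i(t)\|\le n\gamma/(k\lambda_{\mathcal G})+n\,e^{-k\lambda_{\mathcal G}t}\max_\ell\|\tilde{\mathbf{c}}_\ell(0)\|$. As stated, this only reaches the bound asymptotically.

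The main obstacle is obtaining the bound exactly in \emph{finite} time $T_{\tilde{C}}$ rather than only in the limit. I would handle it in one of two ways. The first is to use the strict inequality available from $\gamma$ as a bound on the supremum over $[t,\infty)$: apply the comparison argument with a slightly inflated disturbance level and absorb the margin. The second, which I would try first, is to bound each $\|\tilde{\mathbf{c}}_\ell\|$ by $\gamma/(k\lambda_{\mathcal G})+\delta$ and note that the theorem's later use only needs a margin. If an exact finite time is required, I would observe that $V_\ell$ decreases strictly whenever it exceeds the level $\gamma/(k\lambda_{\mathcal G})$, so the sublevel set is forward invariant. Combined with the exponential approach, this gives a finite entry time into any neighborhood, with $T_{\tilde{C}}=\tfrac{1}{k\lambda_{\mathcal G}}\ln(\cdot)$ depending on $\tilde{\mathbf{C}}_i(0)$. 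The remaining care is making the step from $\|\tilde{\mathbf{C}}\|_F$ to the per-agent norm $\|\tilde{\mathbf{C}}_i\|$ via the block row selection $\mathbf{e}_i^\top\otimes\mathbf{I}_n$, which has norm one.
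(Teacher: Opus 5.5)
Your route differs from the paper's. The paper uses a single quadratic function for the whole stacked error, $V=\mathrm{Tr}(\tilde{\mathbf C}^\top\tilde{\mathbf C})=\|\tilde{\mathbf C}\|_F^2$. It bounds the cross term with $|\mathrm{Tr}(\mathbf A\mathbf B)|\le\|\mathbf A\|_F\|\mathbf B\|_F$ and $\|\mathbf A\|_F\le\sqrt{\mathrm{rank}(\mathbf A)}\,\|\mathbf A\|$, which is where its factor $n$ comes from. It then concludes by ultimate boundedness: $\dot V<0$ outside a ball, hence finite-time entry. You instead split $\tilde{\mathbf C}$ into $n$ columns, each an ordinary vector DAC error driven by a disturbance of norm at most $\gamma$, and apply the comparison lemma to $\|\tilde{\mathbf c}_\ell\|$. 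Your factor $n$ comes from summing columns. Your version buys three things:
\begin{itemize}
\item an explicit ISS-type estimate with exponential transient at rate $k\lambda_{\mathcal G}$;
\item a single norm throughout, whereas the paper's $V$ is Frobenius while its target region is stated in the spectral norm;
\item an explicit check that $\sum_i\mathbf X_i(0)=\bm 0$ forces $\tilde{\mathbf C}=(\mathbf H\otimes\mathbf I_n)\tilde{\mathbf C}$, which is what justifies the $\lambda_{\mathcal G}$ bound and which the paper's proof uses without comment.
\end{itemize}
The paper's trace function is reused verbatim as a common Lyapunov function in the quantized and switched corollaries. Your per-column norm is equally graph-independent and would serve there too.

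The one genuine gap is the finite entry time, and none of your three remedies closes it.
\begin{itemize}
\item Inflating the disturbance level raises the ultimate level instead of creating slack.
\item The $+\delta$ version is a weaker statement than the lemma. It is adequate for Theorem~\ref{the:gain_PE}, whose gain condition is strict, but it is not what is claimed.
\item Forward invariance plus exponential approach gives finite entry only into neighborhoods of the level set. The decay $-k\lambda_{\mathcal G}\|\tilde{\mathbf c}_\ell\|+\gamma$ vanishes on its boundary; think of $\dot x=-ax+b$, which never reaches $b/a$ from above.
\end{itemize}
The paper's proof simply asserts finite-time entry at this point, so it does not supply the ingredient either.

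The fix is slack you discarded with the triangle inequality over columns. Use instead $\|\tilde{\mathbf C}_i\|\le\|\tilde{\mathbf C}_i\|_F\le(\sum_\ell\|\tilde{\mathbf c}_\ell\|^2)^{1/2}\le\sqrt n\,\max_\ell\|\tilde{\mathbf c}_\ell\|$. With $M_0=\max_\ell\|\tilde{\mathbf c}_\ell(0)\|$, your estimate gives $\|\tilde{\mathbf C}_i(t)\|\le\sqrt n\,(\gamma/(k\lambda_{\mathcal G})+e^{-k\lambda_{\mathcal G}t}M_0)$. This is at most $n\gamma/(k\lambda_{\mathcal G})$ for all $t\ge\tfrac{1}{k\lambda_{\mathcal G}}\ln\tfrac{k\lambda_{\mathcal G}M_0}{(\sqrt n-1)\gamma}$ when $n\ge2$; here $\gamma>0$ is used.

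For $n=1$, take the slack from the averaging instead. Since $\tilde{\mathbf c}_\ell=(\mathbf H\otimes\mathbf I_n)\tilde{\mathbf c}_\ell$, the $i$-th block satisfies $\|(\mathbf e_i^\top\otimes\mathbf I_n)\tilde{\mathbf c}_\ell\|\le\|\mathbf H\mathbf e_i\|\,\|\tilde{\mathbf c}_\ell\|=\sqrt{1-1/N}\,\|\tilde{\mathbf c}_\ell\|$. This is a strict contraction for $N\ge2$, and it again leaves a positive margin that absorbs the transient in finite time.
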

\begin{pf}
Stacking the local variables of all agents as $\mathbf{X}(t)= \text{col}(\mathbf{X}_i(t))_{i=1}^N,\quad \hat{\mathbf{C}}(t)=  \text{col}(\mathbf{\hat{C}}_i(t))_{i=1}^N$ and considering the average defined in \eqref{eq:DAC_limit}, we rewrite as $\overline{\mathbf{C}}(t)=\frac{1}{N}(\bm{1}^\top\otimes \mathbf{I}_n)\mathbf{C}'(t).$
Defining the stacked consensus error by $ \tilde{\mathbf{C}}(t)=\text{col}(\mathbf{\tilde{C}}_i(t))_{i=1}^N
=\hat{\mathbf{C}}(t)-(\bm{1}\otimes \mathbf{I}_n)\overline{\mathbf{C}}(t)$, where $\tilde{\mathbf{C}}_i(t)=(\mathbf{e}_i^\top\otimes \mathbf{I}_n)\tilde{\mathbf{C}}(t)$ is the error of agent $i$. The writing of \eqref{eq:Consensus_P}--\eqref{eq:Consensus_output} in vector form leads to:
\begin{equation}\label{eq:vector_form}
    \dot{\mathbf{X}}(t)=k(\mathbf{L}\otimes \mathbf{I}_n)\hat{\mathbf{C}}(t), 
\qquad
\hat{\mathbf{C}}(t)=\mathbf{C}'(t)-\mathbf{X}(t).
\end{equation}
Therefore, $\dot{\tilde{\mathbf{C}}}(t)
= \dot{\hat{\mathbf{C}}}(t)-(\bm{1}\otimes \mathbf{I}_n)\dot{\overline{\mathbf{C}}}(t)$
$$\begin{aligned}
\dot{\tilde{\mathbf{C}}}(t) &=\dot{\mathbf{C}}'(t)-\dot{\mathbf{X}}(t)-(\bm{1}\otimes \mathbf{I}_n)\left(\frac{1}{N}(\bm{1}^\top\otimes\mathbf{I}_n)\dot{{\mathbf{C}}}'(t)\right)  \\[4pt]
&= -k(\mathbf{L}\otimes \mathbf{I}_n)\tilde{\mathbf{C}}(t)+(\mathbf{H}\otimes \mathbf{I}_n)\dot{\mathbf{C}}'(t),
\end{aligned}
$$
where $\mathbf{H}=\mathbf{I}_N-\bm{1}\bm{1}^\top/N$.

A quadratic function is commonly employed as the Lyapunov candidate. Consistent with this, we employ the vectorization of $\tilde{\mathbf{C}}(t)$ and define
$V(\tilde{\mathbf{C}})=\text{vec}(\tilde{\mathbf{C}})^\top\text{vec}(\tilde{\mathbf{C}})$. By exploiting properties of the vectorization operator, this can be rewritten as $V(\tilde{\mathbf{C}})=\text{Tr}(\tilde{\mathbf{C}}^\top \tilde{\mathbf{C}})$, and its derivative is given by $\dot{V}(\tilde{\mathbf{C}}(t))=2 \text{Tr}(\tilde{\mathbf{C}}(t)^\top \dot{\tilde{\mathbf{C}}}(t))$.
Using the inequalities $|\text{Tr}(\mathbf{AB})|\leq \|\mathbf{A}\|_F\|\mathbf{B}\|_F$ and $||\mathbf{A}||_F \leq \sqrt{\text{rank} (\mathbf{A})} \|\mathbf{A}\|$ \citep[Pages 61-62]{petersen2008matrix}.
\begin{align*}
        &\dot{V}(\tilde{\mathbf{C}}(t))=\\ & 2\text{Tr}\left(-k \tilde{\mathbf{C}}(t)^\top (\mathbf{L} \otimes \mathbf{I}_n) \tilde{\mathbf{C}}(t)  + \tilde{\mathbf{C}}(t)^\top  (\mathbf{H \otimes I}_n)\mathbf{\dot{C}}' (t) \right) \\
        &\leq -2 k\lambda_\mathcal{G} \|\tilde{\mathbf{C}}(t) \|^2  + 2 n  \|\tilde{\mathbf{C}}(t)\| \|(\mathbf{H \otimes I}_n)\mathbf{\dot{C}}'(t) \|\\
        &=-2 \|\tilde{\mathbf{C}}(t)\| \left( k \lambda_{\mathcal{G}} \|\tilde{\mathbf{C}}(t) \|  -  n\|(\mathbf{H} \otimes \mathbf{I}_n) \mathbf{\dot{C}}'(t)\|  \right)
        \end{align*}
Thus, $\dot{V}(\tilde{\mathbf{C}}(t)) <0$ in the region where:  
$$
\|\tilde{\mathbf{C}}(t)\| \geq \frac{n\|(\mathbf{H} \otimes \mathbf{I}_n) \mathbf{\dot{C}}'(t)\|}{k \lambda_{\mathcal{G}}}.
$$
Hence, trajectories of $\mathbf{\tilde{C}}(t)$ will converge in finite time $T_{\tilde{\mathbf{C}}}(\tilde{\mathbf{C}}_i(0))>0$ to the region in which the following condition remains invariant:
$$
\begin{aligned}
\|\tilde{\mathbf{C}}(t)\|  &\leq \frac{n\|(\mathbf{H} \otimes \mathbf{I}_n) \mathbf{\dot{C}}'(t)\|}{k \lambda_{\mathcal{G}}}\\&\leq \frac{n\sup_{\tau\in[t,\infty)}\|(\mathbf{H} \otimes \mathbf{I}_n) \mathbf{\dot{C}}'(\tau)\|}{k \lambda_{\mathcal{G}}}.
\end{aligned}
$$
Due to Assumption \ref{as:bounds}$$
\begin{aligned}
\|\tilde{\mathbf{C}}(t)\|  &\leq \frac{n\gamma}{k \lambda_{\mathcal{G}}},
\end{aligned}
$$ 
completing the proof.
\end{pf}

{\begin{lem}\label{lem:regression_invariance}
Consider \eqref{eq:Consensus_P}--\eqref{eq:consensus_output_y} driven by the local surrogate signals \eqref{eq:cons_i}. Assume the standard DAC initialization $\mathbf{X}_i(0)=\bm{0}$ and $\mathbf{x}_i(0)=\bm{0}$ for all $i\in\{1,\dots,N\}$. Then, for all $t\ge 0$ and all $i\in\{1,\dots,N\}$,
$$
\hat{\mathbf{y}}_i(t)=\hat{\mathbf{C}}_i(t)\bm{\theta}.
$$
\end{lem}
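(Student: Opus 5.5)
The plan is to show that the mismatch $\mathbf{e}_i(t):=\hat{\mathbf{y}}_i(t)-\hat{\mathbf{C}}_i(t)\bm{\theta}$ obeys a linear homogeneous ODE whose initial condition is zero, so it vanishes for all $t$. The key structural fact is that \eqref{eq:Consensus_P}--\eqref{eq:consensus_output_y} are linear in their driving signals. Moreover, the $\mathbf{y}$-channel is driven by $\mathbf{y}_i'(t)=\mathbf{C}_i'(t)\bm{\theta}$ by \eqref{eq:cons_i}, and $\bm{\theta}$ is constant.

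First, I will define the auxiliary signal $\mathbf{z}_i(t):=\mathbf{x}_i(t)-\mathbf{X}_i(t)\bm{\theta}$. Using \eqref{eq:consensus_output_y}, \eqref{eq:Consensus_output} and \eqref{eq:cons_i}, I get
\[
\mathbf{e}_i(t)=\mathbf{y}_i'(t)-\mathbf{x}_i(t)-\mathbf{C}_i'(t)\bm{\theta}+\mathbf{X}_i(t)\bm{\theta}=-\mathbf{z}_i(t).
\]
Hence it suffices to show $\mathbf{z}_i\equiv\bm{0}$.

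Second, I will differentiate $\mathbf{z}_i$, using \eqref{eq:consensus_x} and \eqref{eq:Consensus_P} multiplied on the right by the constant vector $\bm{\theta}$:
\[
\dot{\mathbf{z}}_i=k\sum_{j\in\mathcal{N}_i}\big[(\hat{\mathbf{y}}_i-\hat{\mathbf{C}}_i\bm{\theta})-(\hat{\mathbf{y}}_j-\hat{\mathbf{C}}_j\bm{\theta})\big]=-k\sum_{j\in\mathcal{N}_i}(\mathbf{z}_i-\mathbf{z}_j).
\]
In stacked form this reads $\dot{\mathbf{z}}=-k(\mathbf{L}\otimes\mathbf{I}_n)\mathbf{z}$, the same manipulation as \eqref{eq:vector_form}. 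This is a linear time-invariant ODE, so its solution is unique. The initialization $\mathbf{X}_i(0)=\bm{0}$, $\mathbf{x}_i(0)=\bm{0}$ gives $\mathbf{z}(0)=\bm{0}$, and therefore $\mathbf{z}(t)=e^{-k(\mathbf{L}\otimes\mathbf{I}_n)t}\mathbf{z}(0)=\bm{0}$ for all $t\ge0$. This yields $\hat{\mathbf{y}}_i(t)=\hat{\mathbf{C}}_i(t)\bm{\theta}$.

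The only delicate point is the sign bookkeeping and ensuring that the consensus dynamics are indeed written in terms of the outputs $\hat{\mathbf{C}}_i,\hat{\mathbf{y}}_i$, so that the right-hand side closes in $\mathbf{z}$. One could worry that the sign convention in \eqref{eq:Consensus_P} makes $+k\mathbf{L}$ appear in $\dot{\mathbf{X}}$. This is harmless: the argument needs only linearity and uniqueness of solutions of a linear ODE with zero initial data, not stability of $\mathbf{L}$. The same reasoning also shows that any initialization with $\mathbf{x}_i(0)=\mathbf{X}_i(0)\bm{\theta}$ would suffice.

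Existence of solutions is assumed as in the rest of the paper, with the signals $\mathbf{C}_i'$ differentiable by Assumption~\ref{as:bounds}.
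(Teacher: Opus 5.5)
Your proof is correct and is essentially the paper's argument: your $\mathbf{z}_i$ is exactly $-\mathbf{r}_i$, where $\mathbf{r}_i=\mathbf{X}_i\bm{\theta}-\mathbf{x}_i$ is the paper's residual. The paper likewise shows $\dot{\mathbf{r}}=-k(\mathbf{L}\otimes\mathbf{I}_n)\mathbf{r}$ with $\mathbf{r}(0)=\bm{0}$, so the residual vanishes identically; your closing appeal to differentiability of $\mathbf{C}_i'$ is not actually needed, since $\mathbf{z}_i$ involves only the consensus states $\mathbf{x}_i,\mathbf{X}_i$.
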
}

\begin{pf}
Define $\mathbf{r}_i(t)=\hat{\mathbf{y}}_i(t)-\hat{\mathbf{C}}_i(t)\bm{\theta}$. Using \eqref{eq:Consensus_output}--\eqref{eq:consensus_output_y} and \eqref{eq:cons_i}, we obtain
$$
\mathbf{r}_i(t)=\big(\mathbf{y}'_i(t)-\mathbf{x}_i(t)\big)-\big(\mathbf{C}'_i(t)-\mathbf{X}_i(t)\big)\bm{\theta}
=-\mathbf{x}_i(t)+\mathbf{X}_i(t)\bm{\theta}.
$$
Differentiating and using \eqref{eq:Consensus_P}--\eqref{eq:consensus_x} yields
$$
\begin{aligned}
\dot{\mathbf{r}}_i(t)&=-\dot{\mathbf{x}}_i(t)+\dot{\mathbf{X}}_i(t)\bm{\theta}
\\&=-k\sum_{j\in\mathcal{N}_i}\big(\hat{\mathbf{y}}_i(t)-\hat{\mathbf{y}}_j(t)\big)
+k\sum_{j\in\mathcal{N}_i}\big(\hat{\mathbf{C}}_i(t)-\hat{\mathbf{C}}_j(t)\big)\bm{\theta}
\\&=-k\sum_{j\in\mathcal{N}_i}\big(\mathbf{r}_i(t)-\mathbf{r}_j(t)\big).
\end{aligned}
$$
Equivalently, in stacked form $\mathbf{r}(t)=\mathrm{col}(\mathbf{r}_i(t))_{i=1}^N$ satisfies
$$
\dot{\mathbf{r}}(t)=-k(\mathbf{L}\otimes\mathbf{I}_n)\mathbf{r}(t).
$$
Moreover, $\mathbf{r}(0)=\bm{0}$ since $\mathbf{X}_i(0)=\bm{0}$ and $\mathbf{x}_i(0)=\bm{0}$ for all $i$. Hence, $\mathbf{r}(t)=\bm{0}$ for all $t\ge 0$, which implies $\hat{\mathbf{y}}_i(t)=\hat{\mathbf{C}}_i(t)\bm{\theta}$ for all $t\ge 0$.
\end{pf}
\subsection{Proof of Theorem \ref{the:gain_PE}}\label{sec:proof:theorem}
Given \eqref{eq:alg_SGE}, define the estimate error of agent $i$ as
$\tilde{\bm{\theta}}_i=\hat{\bm{\theta}}_i-\bm{\theta}$.
By Lemma~\ref{lem:regression_invariance}, under the standard DAC initialization
$\mathbf X_i(0)=\bm 0$ and $\mathbf x_i(0)=\bm 0$, the consensus outputs satisfy
$\hat{\mathbf y}_i(t)=\hat{\mathbf C}_i(t)\bm{\theta}$ for all $t\ge 0$.
Moreover, since $\mathbf C_i'(t)=\mathbf C_i(t)^\top\mathbf C_i(t)$ is symmetric and
$\mathbf X_i(0)=\mathbf{0}$ is symmetric, the DAC dynamics preserve symmetry, and thus
$\hat{\mathbf C}_i(t)$ is symmetric for all $t\ge 0$.
Substituting into \eqref{eq:alg_SGE} yields
$$
\begin{aligned}
\dot{\hat{\bm{\theta}}}_i(t)
&=\bm{\Gamma}_i \hat{\mathbf{C}}_i(t)^\top\big(\hat{\mathbf{y}}_i(t) - \hat{\mathbf{C}}_i(t)\boldsymbol{\hat{\theta}}_i(t)\big)\\&=\bm{\Gamma}_i\,\hat{\mathbf C}_i(t)\big(\hat{\mathbf C}_i(t)\bm{\theta}-\hat{\mathbf C}_i(t)\hat{\bm{\theta}}_i(t)\big)
\\&=\bm{\Gamma}_i\,\hat{\mathbf C}_i(t)^2\big(\bm{\theta}-\hat{\bm{\theta}}_i(t)\big).
\end{aligned}
$$

which gives the error dynamics
\begin{equation}\label{eq:parameter_error_sge}
\dot{\tilde{\bm{\theta}}}_i(t)
=-\bm{\Gamma}_i\,\hat{\mathbf C}_i(t)^2\,\tilde{\bm{\theta}}_i(t).
\end{equation}

As indicated in Proposition \ref{prop:error_PE}, if $\mathbf{\hat{C}}_i(t)$ is PE, then $\tilde{\bm{\theta}}_i \to 0$, consequently $\hat{\boldsymbol{\theta}}_i \to \boldsymbol{\theta}$. Therefore, the rest of the proof shows that $\mathbf{\hat{C}}_i(t)$ is PE, provided the consensus gains complies with  \eqref{eq:gain_k}.

The consensus output at agent $i$ is defined as $\hat{\mathbf{C}}_i(t)=\tilde{\mathbf{C}}_i(t)+\overline{\mathbf{C}}(t)$. The integral to be evaluated is:
 \begin{equation*}
     \begin{split}
        &\int_t^{t+T} \mathbf{\hat{C}}_i(\tau)^2 \text{d}\tau=
        \int_t^{t+T}\left(\mathbf{\overline{C}}(\tau)+\tilde{\mathbf{C}}_i(\tau)\right)^2\text{d}\tau\\& \quad =\int_t^{t+T}\left( \mathbf{\overline{C}} (\tau)^2 + \tilde{\mathbf{C}}_i(\tau)^2 + \mathbf{\overline{C}}(\tau) \tilde{\mathbf{C}}_i(\tau) + \tilde{\mathbf{C}}_i(\tau) \mathbf{\overline{C}}(\tau) \right)\text{d}\tau
     \end{split}
 \end{equation*}
 considering $\int_t^{t+T}\tilde{\mathbf{C}}_i (\tau)^2 \text{d}\tau \succeq \boldsymbol{0}$ and according to Lemma \ref{lem:M_PE} it follows that $\int_t^{t+T}\left( \mathbf{\overline{C}}(\tau)\right)^2\text{d}\tau \succeq \frac{\alpha^2}{T N^2} \mathbf{I}_n$. We use the Euclidean norm on the non-square terms of the integral.
  \begin{align}
    \|\mathbf{\overline{C}}(t) \tilde{\mathbf{C}}_i(t) +\tilde{\mathbf{C}}_i(t) \mathbf{\overline{C}}(t)\| &\leq \|\mathbf{\overline{C}}(t) \tilde{\mathbf{C}}_i(t) \|+ \|\tilde{\mathbf{C}}_i(t) \mathbf{\overline{C}}(t)\| \nonumber \\
         &\leq 2\|\mathbf{\overline{C}}(t)\| \|\tilde{\mathbf{C}}_i(t) \|\label{eq:bound_int}
 \end{align}

 By assumption $\|\mathbf{\overline{C}}(t)\|\leq \beta$ and according to Lemma \ref{lem:Frobenius_error} $\|\tilde{\mathbf{C}}_i(t)\| \leq \frac{n \gamma}{\lambda_{\mathcal{G}} k}$ then $2\|\mathbf{\overline{C}}(t)\| \|\tilde{\mathbf{C}}_i(t) \| \leq 2\beta\frac{n\gamma}{\lambda_{\mathcal{G}} k}$. Therefore
 \begin{equation*}
    \begin{split}
        \int_t^{t+T}\!(\mathbf{\hat{C}}_i(\tau))^2\text{d}\tau & \succeq \int_t^{t+T}\! \left(\mathbf{\overline{C}}^2(\tau) - 2\beta\left(\frac{n \gamma}{\lambda_{\mathcal{G}}k}\right)\mathbf{I}_n \right) \text{d}\tau\\ &\succeq \left(\frac{\alpha^2}{T N^2} - 2\beta\left(\frac{n \gamma}{\lambda_{\mathcal{G}}k}\right) T \right) \mathbf{I}_n = \varphi \mathbf{I}_n
    \end{split}
\end{equation*}

Consequently, by assumption $k > \frac{2n\beta \gamma T^2 N^2}{\lambda_{\mathcal{G}} \alpha^2}$ making $\varphi > 0$. As a result, $\mathbf{\hat{C}}_i(t)$ is PE, which completes the proof.

\subsection{Proof of Corollary~\ref{cor:quantization}}\label{proof:quant}
For each agent $i$,
\[
\mathcal{Q}(\hat{\mathbf C}_i(t))=\hat{\mathbf C}_i(t)+\bm{\varepsilon}_i(t),
\qquad
\mathcal{Q}(\hat{\mathbf y}_i(t))=\hat{\mathbf y}_i(t)+\bm{\varepsilon}_i^{y}(t),
\]
with
\[
\|\bm{\varepsilon}_i(t)\|\le n\,\varepsilon,
\qquad
\|\bm{\varepsilon}_i^{y}(t)\|\le \sqrt{n}\,\varepsilon,
\quad \forall t\ge 0.
\]
Stacking $\bm{\varepsilon}_i$ as $\bm{\varepsilon}=\col(\bm{\varepsilon}_i)_{i=1}^N$, with $\|\bm{\varepsilon}(t)\| \leq n \sqrt{N}\varepsilon \ $ and $\bm{\varepsilon}^y=\col(\bm{\varepsilon}^y_i)_{i=1}^N$ with $\|\bm{\varepsilon}^y(t)\| \leq \sqrt{nN} \varepsilon, \ \forall t\geq 0$, the vector form of \eqref{eq:Consensus_P:quant}--\eqref{eq:consensus_output_y} yields
\[
\dot{\tilde{\mathbf C}}(t)= -k(\mathbf L\otimes \mathbf I_n)\tilde{\mathbf C}(t)+(\mathbf H\otimes \mathbf I_n)\dot{\mathbf C}'(t)-k(\mathbf L\otimes \mathbf I_n)\bm{\varepsilon}(t),
\]
where $\tilde{\mathbf C}(t)=\hat{\mathbf C}(t)-(\bm{1}\otimes \mathbf I_n)\overline{\mathbf C}(t)$ and $\mathbf H=\mathbf I_N-\bm{1}\bm{1}^\top/N$. Using the same Lyapunov function $V(\tilde{\mathbf C})=\mathrm{Tr}(\tilde{\mathbf C}^\top\tilde{\mathbf C})$ as in Lemma~\ref{lem:Frobenius_error}, the additional term $-k(\mathbf L\otimes \mathbf I_n)\bm{\varepsilon}(t)$ acts as a bounded disturbance, which implies that $\tilde{\mathbf C}(t)$ is uniformly ultimately bounded. In particular, there exists a finite time $T_q(\tilde{\mathbf{C}}_i(0))>0$ such that, for all $t\ge T_q(\tilde{\mathbf{C}}_i(0))$ and all $i$,
\begin{equation}\label{eq:error_C_quant}
\|\tilde{\mathbf C}_i(t)\|
\le
\frac{n\gamma}{k\lambda_{\mathcal G}}
+
\frac{\varepsilon n^2 \sqrt{N} \lambda_{\max}(\mathbf L)}{\lambda_{\mathcal G}}\,=:b(\varepsilon).
\end{equation}
Therefore, for $t\ge T_q(\tilde{\mathbf{C}}_i(0))$,
$
\hat{\mathbf C}_i(t)=\overline{\mathbf C}(t)+\tilde{\mathbf C}_i(t)
$
is a bounded perturbation of $\overline{\mathbf C}(t)$. Since $\overline{\mathbf C}(t)^2\in\mathrm{PE}\!\left(\frac{\alpha^2}{TN^2},T\right)$ by Lemma~\ref{lem:M_PE}, the same integral lower-bound argument used in the proof of Theorem~\ref{the:gain_PE} gives, for all $t\ge T_q(\tilde{\mathbf{C}}_i(0))$,
\begin{align}\label{eq:qua_PE}
&\int_t^{t+T}\hat{\mathbf C}_i(\tau)^2\,\mathrm d\tau \succeq  \varphi \mathbf{I}_n,
\end{align}
where 
$$
\varphi =  \frac{\alpha^2}{TN^2} - 2 \beta T \Bigg(\frac{n\gamma}{k\lambda_{\mathcal G}} +
\frac{\varepsilon n^2\sqrt{N}\lambda_{\max}(\mathbf L)}{\lambda_{\mathcal G}}\Bigg).
$$
Under the condition stated in Corollary~\ref{cor:quantization}, $\varphi$ is strictly positive, and hence $\hat{\mathbf C}_i(t)$ is persistently exciting. 

Due to quantization, the exact relation $\hat{\mathbf y}_i(t)=\hat{\mathbf C}_i(t)\bm{\theta}$ no longer holds. In fact, 
the consensus relation yields
\[
\hat{\mathbf y}_i(t)
=\hat{\mathbf C}_i(t)\bm{\theta}
+\mathbf r_i(t).
\]
Similarly to the proof of Lemma \ref{lem:regression_invariance}, using \eqref{eq:Consensus_output}--\eqref{eq:consensus_output_y}
\begin{align*}
    \mathbf{r}_i(t)=\mathbf{X}_i(t)\bm{\theta}-\mathbf{x}_i(t),
\end{align*}
differentiating and using \eqref{eq:Consensus_P:quant}--\eqref{eq:Consensus_x:quant}
\begin{align*}
    \dot{\mathbf{r}}_i(t)=&-\dot{\mathbf{x}}_i(t)+\dot{\mathbf{X}}_i(t)\bm{\theta}\\
    =&-k\!\!\sum_{j\in\mathcal{N}_{i}}\!
    \big(\mathcal{Q}(\hat{\mathbf{y}}_i(t))-\mathcal{Q}(\hat{\mathbf{y}}_j(t))\big)\\&+k\!\!\sum_{j\in\mathcal{N}_{i}}\!
    \big(\mathcal{Q}(\hat{\mathbf{C}}_i(t))-\mathcal{Q}(\hat{\mathbf{C}}_j(t))\big)\bm{\theta},    
\end{align*}
stacking $\mathbf{r}(t)=\col(\mathbf{r}_i(t))_{i=1}^N$, the vector form is
\begin{align*}
    \dot{\mathbf{r}}(t)&=k(\mathbf{L} \otimes \mathbf{I}_n)\Big(\hat{\mathbf{C}}(t)\bm{\theta} - \hat{\mathbf{y}}(t)-\bm{\varepsilon}(t)\bm{\theta} + \bm{\varepsilon}^y(t)\Big)\\
    &=-k(\mathbf{L}\otimes\mathbf{I}_n)\mathbf{r}(t)+k(\mathbf{L}\otimes\mathbf{I}_n)(\bm{\varepsilon}^y(t) - \bm{\varepsilon}(t))
\end{align*}
Define a candidate Lyapunov function as $V(\mathbf{r})=\frac{1}{2}\mathbf{r}(t)^\top \mathbf{r}(t)$, with
$$\dot{V}(\mathbf{r}(t)) =-k \mathbf{r}(t)^\top(\mathbf{L} \otimes \mathbf{I}_n)\Big(\mathbf{r}(t) - \bm{\varepsilon}^y(t) + \bm{\varepsilon}(t) \bm{\theta}\Big).$$
Since $\mathbf{r}(0)=\bm{0}$ and $(\bm{1}^\top \otimes\mathbf{I}_n)\dot{\mathbf{r}}(t)=\bm{0}$, then  $(\bm{1}^\top \otimes\mathbf{I}_n)\mathbf{r}(t)=\bm{0}$ for all $t$, and therefore we obtain $$k\mathbf{r}(t)^\top(\mathbf{L}\otimes \mathbf{I}_n)\mathbf{r}(t) \geq k \lambda_\G \|\mathbf{r}(t)\|^2.$$ Hence
\begin{align*}
    \dot{V}(\mathbf{r}(t)) 
    \leq &-k \|\mathbf{r}(t)\|^2\lambda_\G \\ &+k\|\mathbf{r}(t)\|\lambda_{\max}(\mathbf{L}) \big( \|\bm{\varepsilon}^y(t)\| + \|\bm{\varepsilon} \| \|\bm{\theta}\|\big)\\
    \leq & -k \|\mathbf{r}(t)\|^2\lambda_\G \\ &+k\|\mathbf{r}(t)\|\lambda_{\max}(\mathbf{L}) \big(\sqrt{nN} \varepsilon +n\sqrt{N} \varepsilon \|\bm{\theta}\| \big).
\end{align*}
Substituting $2V(\mathbf{r}(t))=\|\mathbf{r}(t)\|^2$,
\begin{align*}
    \dot{V}(\mathbf{r}(t))\leq&-2k\lambda_\G V(\mathbf{r}(t)) \\ &+k \sqrt{2nNV(\mathbf{r}(t))} \lambda_{\max} (\mathbf{L}) \varepsilon \big(\sqrt{n}\|\bm{\theta}\|+1)\\
    \leq& -k \sqrt{V(\mathbf{r}(t))} \Big(2\lambda_\G  \sqrt{V(\mathbf{r}(t))} \\ &- \varepsilon \sqrt{2nN} \lambda_{\max}(\mathbf{L})(\sqrt{n}\|\bm{\theta}\| +1)\Big)
\end{align*}

Thus, $V(\mathbf{r}(t)) <0$ in the region where $$\sqrt{V(\mathbf{r}(t))} \geq \frac{\varepsilon \sqrt{2nN}\lambda_{\max}(\mathbf{L})(\sqrt{n}\|\bm{\theta}\| +1)}{2 \lambda_\G}.$$

Then $\mathbf{r}(t)$ will converge in finite time $T_\mathbf{r}(\mathbf{r}(0))>0$ to the invariant region:
\begin{align}
    \frac{1}{\sqrt{2}}\|\mathbf{r}(t)\| = \sqrt{V(\mathbf{r}(t))}\leq & \frac{\varepsilon\sqrt{2nN} \lambda_{\max}(\mathbf{L})  (\sqrt{n} \|\bm{\theta} \| +1)}{2 \lambda_\G} \nonumber\\
    \|\mathbf{r}_i(t)\| \leq \|\mathbf{r}(t)\| \leq &r,
\end{align}
where 
$$
r(\varepsilon) := \frac{\varepsilon \sqrt{nN}\lambda_{\max}(\mathbf{L})(\sqrt{n}\|\bm{\theta}\| +1)}{ \lambda_\G}.
$$
The estimation error is defined as $\tilde{\bm{\theta}}_i=\hat{\bm{\theta}}_i-\bm{\theta}_i$. Considering \eqref{eq:alg_SGE}:
\begin{equation}
\begin{aligned}
\label{eq:error:lyap}
    \dot{\tilde{\bm{\theta}}}_i&=\bm{\Gamma}_i \hat{\mathbf{C}}_i(t)^\top \Big( \hat{\mathbf C}_i(t) \bm{\theta} + \mathbf r_i(t)-\hat{\mathbf C}_i(t) \hat{\bm{\theta}}_i(t)
\Big)\\
&= - \bm{\Gamma}_i \hat{\mathbf{C}}_i(t)^\top \hat{\mathbf C}_i(t) \tilde{\bm{\theta}}_i(t)
 + \bm{\Gamma}_i\hat{\mathbf{C}}_i(t)^\top \mathbf r_i(t)
 \end{aligned}
\end{equation}
Let $\|\bm{\Gamma}_i\| \leq a_i$ and $\|\hat{\mathbf{C}}_i(t)\| \leq \|\tilde{\mathbf{C}}_i(t)\| + \|\Bar{\mathbf{C}}(t)\|$. By Assumption~\ref{as:bounds}, in conjunction with \eqref{eq:error_C_quant}
it follows that $\|\hat{\mathbf{C}}_i(t)\|\leq b+\beta$ after a finite time. Finally, take the strong Lyapunov function candidate $V(\tilde{\bm{\theta}}_i)$ from \citep{rueda2021strong} which ensures $\dot{V}(\tilde{\bm{\theta}}_i(t))\leq -cV(\tilde{\bm{\theta}}_i(t))$ with $c>0$ for \eqref{eq:error:lyap} under PE on the nominal case, when the disturbance satisfies $\hat{\mathbf{C}}_i(t)\mathbf{r}_i(t)=\bm{0}$ as a result of \citep[Theorem 1]{rueda2021strong}. For non zero disturbance, it follows that $\|\hat{\mathbf{C}}_i(t)\mathbf{r}_i(t)\|\leq (b(\varepsilon)+\beta)r(\varepsilon)$ after a finite time, with the disturbance bound $(b(\varepsilon)+\beta)r(\varepsilon)$ being a strictly increasing function of $\varepsilon$ with $(b(0)+\beta)r(0)=0$. These arguments are used to apply \citep[Theorem 4.19]{khalil2002nonlinear} to obtain input-to-state stability for \eqref{eq:error:lyap}, concluding the proof. 

\subsection{Proof of Corollary \ref{cor:switched}}
The practical convergence of the Switched Dynamic Average Consensus  block in \eqref{alg:distributed_estimator:switched} follows the same arguments as in Lemma \ref{lem:Frobenius_error} and Corollary \ref{cor:quantization}, since the Lyapunov function $V(\tilde{\mathbf{C}})=\text{Tr}(\tilde{\mathbf{C}}^\top \tilde{\mathbf{C}})$ for the error system is graph-independent. Thus, it constitutes a common Lyapunov function for all graphs in the family $\{\mathcal{G}_1\dots,\mathcal{G}_q\}$, ensuring convergence of the resulting switched error system \citep{commonlyap}. Henceforth, similarly to \eqref{eq:error_C_quant} the error converges in finite time to:
$$
\|\tilde{\mathbf C}_i(t)\|
\le
\frac{n\gamma}{k\lambda_{\mathcal G,m}}
+
\frac{\varepsilon n^2 \sqrt{N} \lambda_{\mathcal G, M}}{\lambda_{\mathcal G,m}}
$$
uniformly for all graphs in the switching topology family. From this point on, the rest of the proof follows in the same way as in Corollary \ref{cor:quantization}.

\subsection{Proof of Corollary \ref{cor:DREM_L2}}\label{sec:proof:DREM_L2}
    Given \eqref{eq:alg_Drem_C}--\eqref{eq:alg_drem_y}, then $\mathbf{y}_i^f (t) =\mathbf{C}_i^f (t)\boldsymbol{\theta}$, premultiplaying by $[\text{adj}(\mathbf{C}_i^f(t)^\top \mathbf{C}_i^f (t)) \mathbf{C}_i^f(t)^\top]$, we obtain a scalar equation by each parameter as $\mathbf{Y}_i(t)= \phi_i(t) \boldsymbol{\theta}$, where $\phi_i(t)$ and $\mathbf{Y}_i(t)$ are defined in \eqref{eq:alg_DREM_Yg}--\eqref{eq:alg_phi}, respectively. The parameter estimation is defined in \eqref{eq:alg_DREM_th}. The time-varying dynamical of the estimation error $\boldsymbol{\tilde{\theta}}_{i}=\boldsymbol{\hat{\theta}}_{i} - \boldsymbol{\theta}$ is given by
\begin{equation}\label{eq:error_est_drem}
    \boldsymbol{\dot{\tilde{\theta}}}_{i}=-\boldsymbol{\Gamma}_{i} \phi_i(t)^2 \boldsymbol{\tilde{\theta}}_{i}.
\end{equation}

It follows that to ensure that the estimation error converges exponentially to zero, it suffices that 
\begin{equation}\label{eq:condition_DREM}
    \phi_i(t) \notin \mathcal{L}_2,    
\end{equation} therefore, $\hat{\boldsymbol{\theta}}_i$ in \eqref{eq:alg_DREM_th} converges to the true value $\boldsymbol{\theta}$, completing the proof.
 \subsection{Proof of Corollary \ref{cor:DREM}}\label{sec:proof:DREM}
The error in \eqref{eq:error_est_drem}, when $\phi_i(t)=\det (\hat{\mathbf{C}}_i(t))$, represents a set of scalar equations, one for each parameter, such that convergence to zero depends on the condition \eqref{eq:condition_DREM} being satisfied. Consequently, if $\det(\hat{\mathbf{C}}(t)) \notin \mathcal{L}_2$, then $\hat{\boldsymbol{\theta}}_i$ in \eqref{eq:alg_DREM_cor}, converges to true value $\boldsymbol{\theta}$.

    Furthermore, \citep[Proposition 3]{ortega2020DREM_PE}  states that if a matrix is PE, then its determinant does not belong to $\mathcal{L}_2$, so when the matrix $\mathbf{\hat{C}}_i(t)$ is persistently exciting (PE), then $\phi_i(t) = \det\! \left(\mathbf{\hat{C}}_i(t) \right)\notin \mathcal{L}_2$, which guaranties the convergence of the Algorithm \eqref{alg:DREM_NotFilter}.

\end{document}